\newtheorem{mydef}{Definition}
\newtheorem{theorem}{Theorem}
\newtheorem{lemma}{Lemma}
\newtheorem{corollary}{Corollary}
\newenvironment{nscenter}
 {\parskip=0pt\par\nopagebreak\centering}
 {\par\noindent\ignorespacesafterend}
\definecolor{DarkGreen}{rgb}{0.1,0.5,0.1}
\newcommand{\todo}[1]{\textcolor{DarkGreen}{[To do: #1]}}
\newcommand{\rc}[1]{\textcolor{blue}{[Rachel: #1]}}
\newcommand{\ignore}[1]{}
\renewcommand{\vec}[1]{{\textbf{#1}}}
\newcommand{\x}{\vec{x}}
\newcommand{\s}{\vec{s}}
\newcommand{\one}{\mathbbm{1}}
\newcommand{\dist}{\mathcal{D}}
\newcommand{\reals}{\mathbb{R}}
\renewcommand{\Pr}{\mathrm{Pr}}
\newcommand{\T}{\mathcal{T}}
\newcommand{\p}{\textbf{p}}
\newcommand{\m}{\textbf{m}}
\newcommand{\N}{\mathcal{N}}
\newcommand{\vPi}{\boldsymbol{\Pi}}
\newcommand{\bbN}{\mathbb{N}}
\newcommand{\msize}{20pt}
\newcommand{\eps}{\epsilon}
\newcommand{\om}{\omega}
\newcommand{\gam}{\gamma}
\newcommand{\E}{\mathbb{E}}
\newcommand{\R}{\mathbb{R}}
\newcommand{\cS}{\mathcal{S}}
\newcommand{\dom}{\mathcal{I}}
\newcommand{\cM}{\mathcal{M}}
\newcommand{\vp}{\textbf{p}}
\newcommand{\vm}{\textbf{m}}
\newcommand{\vx}{\textbf{x}}
\newcommand{\Prof}{\mathop{\mathrm{Profit}}}
\title{The Possibilities and Limitations of Private Prediction Markets}
\author{Rachel Cummings\thanks{Computing and Mathematical Sciences, California Institute of Technology. Email: \texttt{rachelc@caltech.edu}  Supported in part by a Simons Award for Graduate Students in Theoretical Computer Science.  Much of this research was done while R. Cummings was at Microsoft Research.}
\and
David M.\ Pennock\thanks{Microsoft Research. Email: \texttt{dpennock@microsoft.com}}
\and
Jennifer Wortman Vaughan\thanks{Microsoft Research. Email: \texttt{jenn@microsoft.com}}
}
\begin{document}

\maketitle

\begin{abstract}
  We consider the design of \emph{private prediction markets},
  financial markets designed to elicit predictions about uncertain
  events without revealing too much information about market
  participants' actions or beliefs.  Our goal is to design market
  mechanisms in which participants' trades or wagers influence the
  market's behavior in a way that leads to accurate predictions, yet
  no single participant has too much influence over what others are
  able to observe.  We study the possibilities and limitations of such
  mechanisms using tools from differential privacy.  We begin by
  designing a private one-shot wagering mechanism in which bettors
  specify a belief about the likelihood of a future event and a
  corresponding monetary wager. Wagers are redistributed among bettors
  in a way that more highly rewards those with accurate predictions.
  We provide a class of wagering mechanisms that are guaranteed to
  satisfy truthfulness, budget balance in expectation, and other
  desirable properties while additionally guaranteeing
  $\epsilon$-joint differential privacy in the bettors' reported
  beliefs, and analyze the trade-off between the achievable level of
  privacy and the sensitivity of a bettor's payment to her own report.
  We then ask whether it is possible to obtain privacy in dynamic
  prediction markets, focusing our attention on the popular
  cost-function framework in which securities with payments linked to
  future events are bought and sold by an automated market maker.  We
  show that under general conditions, it is impossible for such a
  market maker to simultaneously achieve bounded worst-case loss and
  $\epsilon$-differential privacy without allowing the privacy
  guarantee to degrade extremely quickly as the number of trades
  grows, making such markets impractical in settings in which privacy
  is valued.  We conclude by suggesting several avenues for
  potentially circumventing this lower bound.
\end{abstract}

\section{Introduction}
\label{sec:intro}

Betting markets of various forms---including the stock
exchange~\citep{Gro:76}, futures markets~\citep{Roll:84}, sports betting
markets~\citep{Gan:98}, and markets at the
racetrack~\citep{Tha:88}---have been shown to successfully collect and
aggregate information.  Over the last few decades, \emph{prediction
  markets} designed specifically for the purpose of elicitation and
aggregation, have yielded useful predictions in domains as diverse as
politics~\citep{Ber:01}, disease surveillance~\citep{Polgreen:2007}, and
entertainment~\citep{PenScience:01}.

The desire to aggregate and act on the strategically valuable
information dispersed among employees has led many companies to
experiment with internal prediction markets.  An internal corporate
market could be used to predict the launch date of a new product or
the product's eventual success.  Among the first companies to
experiment with internal markets were Hewlett-Packard, which
implemented real-money markets, and Google, which ran markets using
its own internal currency that could be exchanged for raffle tickets
or prizes~\citep{CP02,CZ15}. More recently, Microsoft, Intel, Ford, GE,
Siemens, and others have engaged in similar
experiments~\citep{BP2009,C07,CZ15}.

Proponents of internal corporate markets often argue that the market
structure helps in part because, without it, ``business
practices... create incentives for individuals not to reveal their
information''~\citep{CP02}.  However, even with a formal market
structure in place, an employee might be hesitant to bet against the
success of their team for fear of insulting her coworkers or angering
management. If an employee has information that is unfavorable to
the company, she might choose not to report it, leading to predictions
that are overly optimistic for the company and ultimately contributing
to an ``optimism bias'' in the market similar to the bias in Google's
corporate markets discovered by \citet{CZ15}.

To address this issue, we consider the problem of designing
\emph{private} prediction markets.  A private market would allow
participants to engage in the market and contribute to the accuracy of
the market's predictions without fear of having their information or
beliefs revealed.  The goal is to provide participants with a form of
``plausible deniability.''  Although participants' trades or wagers
should together influence the market's behavior and predictions, no
single participant's actions should have too much influence over what
others can observe.  We formalize this idea using the popular notion
of \emph{differential privacy}~\citep{DMNS06,DR14}, which can be used
to guarantee that any participant's actions cannot be inferred from
observations.

We begin by designing a private analog of the \emph{weighted score
  wagering mechanisms} first introduced by \citet{LLW+08}.  A
\emph{wagering mechanism} allows bettors to each specify a belief
about the likelihood of a future event and a corresponding monetary
wager.  These wagers are then collected by a centralized operator and
redistributed among bettors in such a way that more accurate bettors
receive higher rewards. \citet{LLW+08} showed that the class of
weighted score wagering mechanisms, which are built on the machinery
of proper scoring rules~\citep{Gneiting:07}, is the unique set of
wagering mechanisms to satisfy a set of desired properties such as
budget balance, truthfulness, and anonymity.  We design a class of
wagering mechanisms with randomized payments that maintain the nice
properties of weighted score wagering mechanisms in expectation while
additionally guaranteeing $\epsilon$-joint differential privacy in the
bettors' reported beliefs.  We discuss the trade-offs that exist
between the privacy of the mechanism (captured by the parameter
$\epsilon$) and the sensitivity of a bettor's payment to her own
report, and show how to set the parameters of our mechanisms to
achieve a reasonable level of the plausible deniability desired in
practice.

We next address the problem of running private dynamic prediction
markets.  We consider the setting in which traders buy and sell
securities with values linked to future events.  For example, a market
might offer a security worth $\$1$ if Microsoft Bing's market share
increases in 2016 and $\$0$ otherwise.  A risk neutral trader who
believes that the probability of Bing's market share increasing is $p$
would profit from buying this security at any price less than $\$p$ or
(short) selling it at any price greater than $\$p$.  The market price
of the security is thought to reflect traders' collective beliefs
about the likelihood of this event.  We focus on \emph{cost-function
  prediction markets}~\citep{CP07,ACV13} such as Hanson's popular
logarithmic market scoring rule~\citep{H03}.  In a cost-function
market, all trades are placed through an \emph{automated market
  maker}, a centralized algorithmic agent that is always willing to
buy or sell securities at some current market price that depends on
the history of trade via a potential function called the cost
function. We ask whether it is possible for a market maker to price
trades according to a noisy cost function in a way that maintains
traders' privacy without allowing traders to make unbounded profit off
of the noise.  Unfortunately, we show that under general assumptions,
it is impossible for a market maker to achieve bounded loss and
$\epsilon$-differential privacy without allowing the privacy guarantee
to degrade very quickly as the number of trades grows.  In particular,
the quantity $e^\epsilon$ must grown faster than linearly in the
number of trades, making such markets impractical in settings in which
privacy is valued.  We suggest several avenues for future research
aimed at circumventing this lower bound.

There is very little prior work on the design of private prediction
markets, and to the best of our knowledge, we are the first to
consider privacy for one-shot wagering mechanisms.  Most closely
related to our work is the recent paper of \citet{WFA15} who consider
a setting in which each of a set of self-interested agents holds a
private data point consisting of an observation $x$ and corresponding
label $y$.  A firm would like to purchase the agents' data in order to
learn a function to accurately predict the labels of new observations.
Building on the mathematical foundations of cost-function market
makers, \citeauthor{WFA15} propose a mechanism that provides
incentives for the agents to reveal their data to the firm in such a
way that the firm is able to solve its prediction task while
maintaining the agents' privacy.
The authors mention that similar ideas can be applied to produce
privacy-preserving prediction markets, but their construction requires
knowing the number of trades that will occur in advance to
appropriately set parameters.  The most straightforward way of
applying their techniques to prediction markets results in a market
maker falling in the class covered by our impossibility result,
suggesting that such techniques cannot be used to derive a
privacy-preserving market with bounded loss when the number trades is
unknown.

\section{Tools from Differential Privacy}
\label{sec:prelims}
\label{sec:privacyprelims} 

We formalize privacy using the now-standard notion of
\emph{differential privacy}, which was introduced by
\citet{DMNS06}. The most basic version of differential privacy is used
to measure the privacy of a randomized algorithm's output when given
as input a database $D$ with $n$ entries from some input domain
$\dom$.  Differential privacy is often studied in settings in which
the $n$ entries are provided by $n$ agents, each of whom would like to
keep their entry private. 
Two databases $D$ and $D'$ are said to be \emph{neighboring} if they
differ only in a single entry.  Differential privacy requires that the
distribution of the algorithm's output given $D$ is ``close to'' the
distribution of its output given any neighboring database $D'$.  For
these definitions, it is enough to view an ``algorithm'' as a
randomized function mapping inputs to outputs; we are not concerned
with the precise way in which the outputs are computed.  In the
following definitions, we restrict to real-valued outputs for
consistency with the algorithms used in this paper.

\begin{mydef}[Differential Privacy~\citep{DMNS06}]
\label{def:dp}
For any $\epsilon, \delta \geq 0$, an algorithm $\cM:\dom^n\rightarrow \reals$ is
{\em$(\epsilon,\delta)$-differentially private} if for every pair of
neighboring databases $D, D' \in \dom^n$ and every subset $\mathcal{S}
\subseteq \reals$,
\[ 
\Pr[\cM(D) \in \mathcal{S}] \leq e^{\epsilon} \Pr[\cM(D') \in
\mathcal{S}] + \delta. 
\]
If $\delta = 0$, we say that $\cM$ is {\em$\epsilon$-differentially private}.
\end{mydef}

As $\epsilon$ approaches 0, it becomes increasingly more difficult to
distinguish neighboring databases, leading to a higher level of
privacy.  As $\epsilon$ grows large, the privacy guarantee grows
increasingly weak.  There is generally no consensus about what
constitutes a ``good'' value of $\epsilon$, and the strength of the
guarantee needed may depend on the application.  We discuss this point
more later in the context of our results.

We sometimes abuse terminology and say that a random variable (such as
a bettor's profit) is differentially private.  This should be taken to
mean that the algorithm used to compute the value of the random
variable is differentially private.

\ignore{
\begin{mydef}[Differential Privacy~\cite{DMNS06}]\label{def.dp}
  An algorithm $\cM:\dom^n\rightarrow \reals$ is
  {\em$(\epsilon,\delta)$-differentially private} if for every pair of
  neighboring databases $D, D' \in \dom^n$ and for every subset of
  possible outputs $\mathcal{S} \subseteq \reals$,
\[ \Pr[\cM(D) \in \mathcal{S}] \leq \exp(\epsilon)\Pr[\cM(D') \in \mathcal{S}] + \delta. \]
If $\delta = 0$, we say that $\cM$ is {\em $\epsilon$-differentially private}.
\end{mydef}
} 


In the context of mechanism design, differential privacy is often too
strong of a notion.  Suppose, for example, that the algorithm $\cM$
outputs a vector of prices that each of $n$ agents will pay based on
their joint input.  While we may want the price that agent $i$ pays to
be differentially private in the input of the \emph{other} agents, it
is natural to allow it to be more sensitive to changes in $i$'s own
input.  To capture this idea, \citet{KPRU14} defined the notion of
\emph{joint differential privacy}.  Call two neighboring databases $D$
and $D'$ \emph{$i$-neighbors} if they differ only in the $i$th entry.
Suppose that $\cM$ now outputs one element for each agent $i \in
\{1,\ldots,n\}$ and let $\cM(D)_{-i}$ denote the vector of outputs to
all agents excluding agent $i$.  Then joint differential privacy is
defined as follows.

\begin{mydef}[Joint Differential Privacy~\citep{KPRU14}]
\label{def:jdp}
For any $\epsilon, \delta \geq 0$, an algorithm $\cM:\dom^n \rightarrow \reals^n$ is
{\em $(\epsilon,\delta)$-joint differentially private} if for every $i \in
\{1,\ldots,n\}$, for every pair of $i$-neighbors $D, D' \in \dom^n$,
and for every subset $\mathcal{S} \subseteq \reals^{n-1}$,
\[ 
\Pr[\cM(D)_{-i} \in \mathcal{S}] \leq e^{\epsilon} \Pr[\cM(D')_{-i}
\in \mathcal{S}] + \delta. 
\]
If $\delta = 0$, we say that $\cM$ is {\em$\epsilon$-joint differentially private}.
\end{mydef}

\ignore{
\begin{mydef}[Joint Differential Privacy~\cite{KPRU14}]\label{def.jdp}
 An algorithm $\cM:\dom^n \rightarrow \reals^n$ is
  {\em $(\epsilon,\delta)$-joint differentially private} if for every
  $i$, for every pair of $i$-neighbors $D, D' \in \dom^n$, and for
  every subset of outputs $\mathcal{S} \subseteq \reals^{n-1}$,
\[ \Pr[\cM(D)_{-i} \in \mathcal{S}] \leq \exp(\epsilon)\Pr[\cM(D')_{-i} \in \mathcal{S}] + \delta. \]
If $\delta = 0$, we say that $\cM$ is {\em $\epsilon$-jointly differentially private}.
\end{mydef}
}

Joint differential privacy is still a strong requirement.  It protects
the privacy of any agent $i$ from arbitrary coalitions; even if all
other agents shared their private output, they would still not be able
to learn too much about the input of agent $i$.

One useful tool for proving joint differential privacy is the
\emph{billboard lemma} \citep{HHR+14}.  The idea behind the billboard
lemma is quite intuitive and simple.  Imagine that we display some
message publicly so that it is viewable by all $n$ agents, as if
posted on a billboard, and suppose that the algorithm to compute this
message is $\epsilon$-differentially private.  If each agent $i$'s
output $\cM(D)_i$ is computable from this public message along with
$i$'s own private input, then $\cM$ is $\epsilon$-joint differentially
private.  A more formal statement and proof are given in
\citet{HHR+14}.


\ignore{
\begin{lemma}[Billboard Lemma \cite{HHR+14}]\label{lem.billboard}
Suppose $\cM\colon \dom^n \rightarrow \reals$ is $(\eps, \delta)$-differentially private. Consider any set of functions $F_i\colon \dom_i\times \reals \rightarrow \reals'$, where $\dom_i$ is the $i$-th entry of the input data. The composition $\{F_i\left(\prod_iD, \cM(D)\right)\}$ is $(\eps, \delta)$-jointly differentially private, where $\prod_i$ is the projection to $i$'s data.
\end{lemma}
}

\label{sec:privstreams}

The definitions above assume the input database $D$ is fixed.
Differential privacy has also been considered for streaming
algorithms~\citep{CSS11,DNPR10}.  Let $\bbN = \{1, 2, 3, \ldots\}$.
Following~\citet{CSS11}, a stream $\sigma \in \dom^\bbN$ is a string
of countable length of elements in $\dom$, where $\sigma_t \in \dom$
denotes the element at position or \emph{time} $t$ and $\sigma_{1,
  \ldots, t} \in \dom^t$ is the length $t$
prefix of the stream $\sigma$.  Two streams $\sigma$ and $\sigma'$ are
said to be \emph{neighbors} if they differ at exactly one time $t$.

A streaming algorithm $\cM$ is said to be \emph{unbounded} if it
accepts streams of indefinite length, that is, if for any stream
$\sigma \in \dom^\bbN$, $\cM(\sigma) \in \reals^\bbN$.  In contrast, a
streaming algorithm is $T$-\emph{bounded} if it accepts only streams
of length at most $T$.  \citet{DNPR10} consider only $T$-bounded
streaming algorithms. Since we consider unbounded streaming
algorithms, we use a more appropriate definition of
differential privacy for streams adapted from \citet{CSS11}. For
unbounded streaming algorithms, it can be convenient to let the
privacy guarantee degrade as the input stream grows in length.
\citet{CSS11} implicitly allow this in some of their results; see, for
example, Corollary 4.5 in their paper.  For clarity and preciseness,
we explicitly capture this in our definition. Here and throughout the
paper we use $\reals_+$ to denote the nonnegative reals.

\ignore{
\begin{mydef}[Differential Privacy for Streams~\cite{CSS11}]
\label{def:streamdp}
For any $\epsilon > 0$, a streaming algorithm $\cM:\dom^\bbN
\rightarrow \reals^\bbN$ is {\em $\epsilon$-differentially private} if
for every pair of neighboring streams $\sigma, \sigma' \in \dom^\bbN$
and for every subset $\mathcal{S} \subseteq \reals^{\bbN}$,
\[ 
\Pr[\cM(\sigma) \in \mathcal{S}] \leq e^{\epsilon} \Pr[\cM(\sigma')
\in \mathcal{S}]. 
\]
\end{mydef}
} 

\ignore{
\begin{mydef}[Differential Privacy for Streams]
\label{def:streamdp}
For any non-decreasing functions
$\epsilon : \bbN \rightarrow \reals_{+}$ and
$\delta : \bbN \rightarrow \reals_{+}$, a streaming algorithm
$\cM:\dom^\bbN \rightarrow \reals^\bbN$ is {\em
  $(\epsilon(t), \delta(t))$-differentially private} if for every pair of
neighboring streams $\sigma, \sigma' \in \dom^\bbN$, for every
$t \in \bbN$, and for every subset $\mathcal{S} \subseteq \reals^{t}$,
\[ 
\Pr[\cM(\sigma_{1, \ldots, t}) \in \mathcal{S}] \leq e^{\epsilon(t)}
\Pr[\cM(\sigma'_{1, \ldots, t})
\in \mathcal{S}] + \delta(t). 
\]
If $\delta(t) = 0$ for all $t$, we say that $\cM$ is {\em $\epsilon(t)$-differentially private}.
\end{mydef}
} 

\begin{mydef}[Differential Privacy for Streams]
\label{def:streamdp}
For any non-decreasing function
$\epsilon : \bbN \rightarrow \reals_{+}$ and any
$\delta \geq 0$, a streaming algorithm
$\cM:\dom^\bbN \rightarrow \reals^\bbN$ is {\em
  $(\epsilon(t), \delta)$-differentially private} if for every pair of
neighboring streams $\sigma, \sigma' \in \dom^\bbN$, for every
$t \in \bbN$, and for every subset $\mathcal{S} \subseteq \reals^{t}$,
\[ 
\Pr[\cM(\sigma_{1,\ldots,t}) \in \mathcal{S}] \leq e^{\epsilon(t)} \Pr[\cM(\sigma'_{1,\ldots,t})
\in \mathcal{S}] + \delta. 
\]
If $\delta = 0$, we say that $\cM$ is {\em $\epsilon(t)$-differentially private}.
\end{mydef} 

Note that we allow $\epsilon$ to grow with $t$, but require that
$\delta$ stay constant.  In principle, one could also allow $\delta$
to depend on the length of the stream.  However, allowing $\delta$ to
increase would likely be unacceptable in scenarios in which privacy is
considered important.  In fact, it is more typical to require
\emph{smaller} values of $\delta$ for larger databases since for a
database of size $n$, an algorithm could be considered
$(\epsilon, \delta)$-private for $\delta$ on the order of $1/n$ even
if it fully reveals a small number of randomly chosen database
entries~\citep{DR14}.  Since we use this definition only when showing
an impossibility result, allowing $\delta$ to decrease in $t$ would
not strengthen our result.

\ignore{
We discuss how the particular streaming algorithms of \citet{CSS11}
and \citet{DNPR10} could be applied in the context of dynamic
prediction markets and the relationship to our lower bounds in
Section~\ref{sec:costfuncs}.
}

\ignore{
The mechanisms of \citet{CSS11} and \citet{DNPR10} show how to
maintain a noisy count of the number of ones in a stream of bits.
Their algorithms satisfy differential privacy along with guarantees on
the accuracy of the counts that degrade in quality with the length of
the input stream.  Both papers achieve this by computing the exact
count and adding noise that is correlated across time but independent
of the data.  This correlation in the noise allows for improved
accuracy of the noisy count over naively adding fresh noise at each
time $t$. 
}

\section{Private Wagering Mechanisms}
\label{sec:oneshot} 

We begin with the problem of designing a one-shot wagering mechanism
that incentivizes bettors to truthfully report their beliefs while
maintaining their privacy.  A wagering mechanism allows a set of
bettors to each specify a belief about a future event and a monetary
wager.  Wagers are collected by a centralized operator and
redistributed to bettors in such a way that bettors with more accurate
predictions are more highly rewarded. \citet{LLW+08} showed that the
class of \emph{weighted score wagering mechanisms} (WSWMs) is the
unique class of wagering mechanisms to satisfy a set of desired axioms
such as budget balance and truthfulness.  In this section, we show how
to design a randomized wagering mechanism that achieves
$\epsilon$-joint differential privacy while maintaining the nice
properties of WSWMs in expectation.

\subsection{Standard wagering mechanisms}

Wagering mechanisms, introduced by \citet{LLW+08}, are mechanisms
designed to allow a centralized operator to elicit the beliefs of a
set of bettors without taking on any risk.  In this paper we focus on
\emph{binary} wagering mechanisms, in which each bettor $i$ submits a
report $p_i \in [0,1]$ specifying how likely she believes it is that a
particular event will occur, along with a wager $m_i \geq 0$
specifying the maximum amount of money that she is willing to lose.
After all reports and wagers have been collected, all parties observe
the realized outcome $\omega \in \{0,1\}$ indicating whether or not
the event occurred.  Each bettor $i$ then receives a payment that is a
function of the outcome and the reports and wagers of all bettors.
This idea is formalized as follows.

\begin{mydef}[Wagering Mechanism~\citep{LLW+08}]
  A \emph{wagering mechanism} for a set of bettors $\N =
  \{1,\ldots,n\}$ is specified by a vector $\vPi$ of (possibly
  randomized) profit functions, $\Pi_i : [0,1]^n \times \reals^n_+
  \times \{0,1\} \to \R$, where $\Pi_i(\p, \vm, \om)$ denotes the
  total profit to bettor $i$ when the vectors of bettors' reported
  probabilities and wagers are $\p$ and $\vm$ and the
  realized outcome is $\om$.  It is required that $\Pi_i(\p, \vm, \om)
  \geq -m_i$ for all $\p$, $\vm$, and $\om$, which ensures that no
  bettor loses more than her wager.
\end{mydef}

There are two minor differences between the definition presented here
and that of \citet{LLW+08}.  First, for convenience, we use $\Pi_i$ to
denote the \emph{total} profit to bettor $i$ (i.e., her payment from
the mechanism minus her wager), unlike \citet{LLW+08}, who use $\Pi_i$
to denote the payment only.  While this difference is inconsequential,
we mention it to avoid confusion.  Second, all previous work on
wagering mechanisms has restricted attention to \emph{deterministic}
profit functions $\Pi_i$.  Since randomization is necessary to attain
privacy, we open up our study to \emph{randomized} profit functions.

\ignore{ 
  One property that we require this mechanism to have, is that
  bettors should not lose more than their wager, even if they make an
  incorrect prediction.  This is necessary for two reasons.  First,
  logistics: One can imagine this mechanism being run in two periods.
  In the first period, bettors ``buy in'' with their wager $m_i$, then
  the true outcome $\om$ is realized, and in the second period bettors
  each receive some payment based on the run of the mechanism.  If a
  bettor is expected to pay more than her wager, the designer (or
  market maker) has no means of extracting the additional payment from
  the bettor.  Second --- and more important to the incentives of
  bettors in the game --- each bettor $i$ receives utility $\Pi_i(\p,
  \vm, \om)$.  If a bettor risks losing more than her wager, then she
  may not have an incentive to even participate in the mechanism.
  Thus we will require that for all bettors $i$, and for all $\p$,
  $\vm$, $\om$, and randomness in the profit function, $\Pi_i(\p, \vm,
  \om) \geq -m_i$.  
}

\citet{LLW+08} defined a set of desirable properties or axioms that
deterministic wagering mechanisms should arguably satisfy. Here we
adapt those properties to potentially randomized wagering mechanisms,
making the smallest modifications possible to maintain the spirit of
the axioms.  Four of the properties (truthfulness, individual
rationality, normality, and monotonicity) were originally defined in
terms of expected profit with the expectation taken over some true or
believed distribution over the outcome $\omega$.  We allow the
expectation to be over the randomness in the profit function as well.
Sybilproofness was not initially defined in expectation; we now ask
that this property hold in expectation with respect to the randomness
in the profit function.  We define anonymity in terms of the
distribution over all bettors' profits, and ask that budget balance
hold for any realization of the randomness in $\vPi$.

\begin{itemize}[leftmargin=\msize]

\item[(a)] {\bf Budget balance:}
The operator makes no profit or loss, i.e., $\forall \p \in
[0,1]^n$, $\forall \vm \in \reals^n_+$, $\forall \om \in \{0,1\}$, and
for any realization of the randomness in $\vPi$,
$
\sum_{i =1}^n \Pi_i (\p, \vm, \om) = 0.
$
\item[(b)] {\bf Anonymity:} Profits do not depend on the identify of
  the bettors. That is, for any permutation of the bettors $\sigma$, $\forall \p \in
  [0,1]^n$, $\forall \vm \in \reals^n_+$, $\forall \om \in \{0,1\}$,
  the joint distribution over profit vectors $\{\Pi_i(\p, \vm,
  \omega)\}_{i \in \N}$ is the same as the joint distribution over
  profit vectors $\{\Pi_{\sigma(i)}\left( (p_{\sigma^{-1}(i)})_{i \in
      \N}, (m_{\sigma^{-1}(i)})_{i \in \N}, \om \right)\}_{i \in \N}$.
\item[(c)] {\bf Truthfulness:} Bettors uniquely maximize their
  expected profit by reporting the truth. That is,
  $\forall i \in \N$, $\forall \vp_{-i} \in [0,1]^{n-1}$,
  $\forall \vm \in \reals^n_+$, $\forall p^*, p_i \in [0,1]$ with $p_i \neq p^*$,
\[ 
\E_{\omega \sim p^*} \left[ \Pi_i( (p^*, \vp_{-i}), \vm, \om) \right] >
\E_{\omega \sim p^*} \left[ \Pi_i(
  (p_i, \vp_{-i}), \vm, \om) \right]. 
\]
\item[(d)] {\bf Individual rationality:} Bettors prefer participating
  to not participating. That is, $\forall i \in \N$, $\forall
  m_i >0$, for all $p^* \in [0,1]$, there exists some $p_i \in [0,1]$
  such that $\forall \p_{-i} \in [0,1]^{n-1}$, $\forall \vm_{-i} \in \reals^{n-1}_+$,
$
E_{\omega \sim p^*}\left[ \Pi_i( (p_i, \vp_{-i}), \vm, \om) \right] \geq
0. 
$
\item[(e)] {\bf Normality:\footnote{\citet{LL+15} and \citet{CDPV14}
      used an alternative definition of normality for wagering
      mechanisms that essentially requires that if, from some agent
      $i$'s perspective, the prediction of agent $j$ improves, then
      $i$'s expected profit decreases. This form of normality also
      holds for our mechanism.}}
%
\ignore{
 i.e., $\forall i, j \in \N$, $i \neq j$,
  $\forall p^* \in [0,1]$, $\forall \p \in [0,1]^n$, $\forall \vm \in
  \reals^n_+$, $\forall \om \in \{0,1\}$, let $\hat{\p}$ be defined
  with $\hat{p}_j = p^*$ and $\hat{p}_k = p_k$ for $k \neq j$. Then
\[ 
\E_{\omega \sim p^*} \left[ \Pi_i(\p, \vm, \om) \right] \geq
\E_{\omega \sim p^*} \left[ \Pi_i(\hat{\p}, \vm, \om) \right]. 
\]
} 
If any bettor $j$ changes her report, the change in the expected
profit to any other bettor $i$ with respect to a fixed belief $p^*$ is
the opposite sign of the change in expected payoff to $j$. That is, $\forall i, j \in \N$, $i \neq j$, $\forall \p, \p' \in [0,1]^n$
with $p'_k = p_k$ for all $k \neq j$,
$\forall p^* \in [0,1]$, $\forall \vm \in \reals^n_+$,
\[
\E[\Pi_j(\p, \vm, \om)] < \E[\Pi_j(\p', \vm, \om)]
\implies 
\E[\Pi_i(\p, \vm, \om)] \geq \E[\Pi_i(\p', \vm, \om)].
\]
All expectations are taken w.r.t. $\omega \sim p^*$ and the randomness
in the mechanism.
\item[(f)] {\bf Sybilproofness:} Profits remain unchanged as any
  subset of players with the same reports manipulate user accounts by
  merging accounts, creating fake identities, or transferring wagers. That is, $\forall \cS \subset \N$, $\forall \p$ with $p_i = p_j$ for all
  $i, j \in \cS$, $\forall \vm, \vm' \in \reals^n_+$ with $m_i = m_i'$
  for $i \notin \cS$ and $\sum_{i \in \cS} m_i = \sum_{i \in \cS}
  m_i'$, $\forall \omega \in \{0,1\}$, two conditions
  hold:
\vspace{-5pt}
\[ 
\E\left[\Pi_i(\p, \vm, \om)\right] = \E\left[\Pi_i(\p, \vm',
  \om)\right] \qquad \forall i \notin \cS,
\]
\[ 
\sum_{i \in \cS} \E\left[\Pi_i(\p, \vm, \om)\right] = \sum_{i \in \cS}
\E\left[\Pi_i(\p, \vm', \om)\right]. 
\]
\item[(g)] {\bf Monotonicity} The magnitude of a bettor's expected profit (or
  loss) increases as her wager increases. That is, $\forall i \in
  \N$, $\forall \p \in [0,1]^n$, $\forall \vm \in
  \reals^n_+$, $\forall M_i > m_i$, $\forall p^* \in [0,1]$, either
$
0 < \E_{\omega \sim \p^*}[\Pi_i(\p, (m_i, \vm_{-i}), \om) ] < \E_{\omega \sim \p^*}[\Pi_i(\p, (M_i,
\vm_{-i}), \om) ] 
$
or
$ 
0 > \E_{\omega \sim \p^*}[\Pi_i(\p, (m_i, \vm_{-i}), \om) ] > \E_{\omega \sim \p^*}[\Pi_i(\p, (M_i,
\vm_{-i}), \om) ]. 
$
\end{itemize}

Previously studied wagering mechanisms \citep{LLW+08,CDPV14,LL+15}
achieve truthfulness by incorporating \emph{strictly proper scoring
  rules} \citep{Savage:71} into their profit
functions. Scoring rules reward individuals based on the accuracy of
their predictions about 
random variables.  For a binary random variable, a scoring rule $s$
maps a prediction or report $p \in [0,1]$ and an outcome
$\omega \in \{0,1\}$ to a score.  A strictly proper scoring
rule incentivizes a risk neutral agent to report her true
belief. 

\begin{mydef}[Strictly proper scoring rule \citep{Savage:71}]
  A function $s: [0,1] \times \{0,1\} \to \reals \cup \{-\infty\}$ is
  a \emph{strictly proper scoring rule} if for all $p, q \in [0,1]$ with $p
  \neq q$, $\E_{\omega \sim p}[s(p,\om)] > \E_{\omega \sim p} [s(q,
  \om)]$.
\end{mydef}

One common example is the Brier scoring rule~\citep{Brier:50}, defined
as $s(p,\omega) = 1 - (p-\omega)^2$.  Note that for the Brier scoring
rule, $s(p,x) \in [0,1]$ for all $p$ and $\omega$.  Any strictly
proper scoring rule with a bounded range can be scaled to have range
$[0,1]$.

The WSWMs incorporate proper scoring rules, assigning each bettor a
profit based on how her score compares to the wager-weighted average
score of all bettors, as in Algorithm~\ref{alg:wswm}.  \citet{LLW+08}
showed that the set of WSWMs satisfy the seven axioms above and is the
\emph{unique} set of deterministic mechanisms that simultaneously
satisfy budget balance, anonymity, truthfulness, normality, and
sybilproofness.

\begin{algorithm}[h!]
  \begin{algorithmic}
    \STATE{Parameters: number of bettors $n$, strictly proper scoring
      rule $s$ with range in $[0,1]$}
    \STATE{Solicit reports $\p$ and wagers $\vm$}
    \STATE{Realize state $\omega$}
    \FOR{$i=1, \ldots, n$}
    \STATE{Pay bettor $i$
      \vspace{-15pt}
      \[
\Pi_i(\p, \vm, \om) = m_i \left(s(p_i, \om) - \frac{\sum_{j\in\N}
    m_j s(p_j, \om)}{\sum_{j \in \N} m_j} \right)
      \]
      \vspace{-10pt}
    }
    \ENDFOR
\end{algorithmic}
\caption{Weighted-score wagering mechanisms \citep{LLW+08}}
  \label{alg:wswm}
\end{algorithm}

\ignore{
\begin{mydef}[Weighted-score wagering mechanisms \cite{LLW+08}]\label{def.weighted}
  A \emph{weighted-score wagering mechanism} (WSWM) is a wagering
  mechanism in which the profit function $\Pi_i$ for each bettor $i$
  is defined as
\[ \Pi_i(\p, \vm, \om) = m_i \left(s(p_i, \om) - \frac{\sum_{j\in\N}
    m_j s(p_j, \om)}{\sum_{j \in \N} m_j} \right), \]
where $s$ is a strictly proper scoring rule with range in $[0,1]$.
\end{mydef}
}

\subsection{Adding privacy}

We would like our wagering mechanism to protect the privacy of each
bettor $i$, ensuring that 
the $n-1$ other bettors cannot learn too much about $i$'s report from
their own realized profits, even if they collude.  Note that paying
each agent according to an independent scoring rule would easily
achieve privacy, but would fail budget balance and sybilproofness.  We
formalize our desire to add privacy to the other good properties of
weighted score wagering mechanisms using joint differential privacy.

\begin{itemize}[leftmargin=\msize]
\item[(h)] {\bf $\epsilon$-joint differential privacy:} The vector of
  profit functions satisfies $\epsilon$-joint differential privacy,
  i.e., $\forall i \in \N$, $\forall \p \in [0,1]^n$, $\forall p'_i
  \in [0,1]$, $\forall \m \in \reals^n_+$, $\forall \omega \in
  \{0,1\}$, and $\forall \mathcal{S} \subset \reals^{n-1}_+$,
$
\Pr[\Pi_{-i}((p_i, \p_{-i}),\m,\omega) \in \mathcal{S}] \leq e^{\epsilon} \Pr[\Pi_{-i}((p'_i, \p_{-i}),\m,\omega)
\in \mathcal{S}]. 
$
\end{itemize}

This definition requires only that the report $p_i$ of each bettor $i$
be kept private, not the wager $m_i$. Private wagers would impose more
severe limitations on the mechanism, even if wagers are restricted to
lie in a bounded range; see Section~\ref{sec:privwagers} for a
discussion.  Note that if bettor $i$'s report $p_i$ is correlated with
his wager $m_i$, as might be the case for a Bayesian
agent~\citep{LL+15}, then just knowing $m_i$ could reveal information
about $p_i$.  In this case, differential privacy would guarantee that
other bettors can infer no more about $p_i$ after observing their
profits than they could from observing $m_i$ alone.  If bettors have
immutable beliefs as assumed by \citet{LLW+08}, then reports and
wagers are not correlated and $m_i$ reveals nothing about $p_i$.

\ignore{
This requirement will be satisfied whenever the bettors have immutable
beliefs about $\omega$, as all bettors will wager the maximum amount
allowed. We leave as an open question how to achieve privacy of both
reports and wagers, in settings where a bettor's wager may be
correlated with her report. \todo{We should also think of mathematical
  justification as to why private wagers is hard/impossible.}
}

Unfortunately, it is not possible to jointly obtain properties (a)--(h)
with any reasonable mechanism.  This is due to an inherent tension
between budget balance and privacy.  This is easy to see.  Budget
balance requires that a bettor $i$'s profit is the negation of the sum
of profits of the other $n-1$ bettors, i.e., $\Pi_i(\p,\vm,\om) = -
\sum_{j \neq i} \Pi_j(\p,\vm,\om)$.  Therefore, under budget balance,
the other $n-1$ bettors could always collude to learn bettor $i$'s
profit exactly.  In order to obtain privacy, it would therefore be
necessary for bettor $i$'s profit to be differentially private in her
own report, resulting in profits that are almost entirely noise.
This is formalized in the following theorem. We omit a formal proof since
it follows immediately from the argument described here.

\begin{theorem}
\label{thm:jdpandsens}
Let $\vPi$ be the vector of profit functions for any wagering mechanism
that satisfies both budget balance and $\epsilon$-joint differential
privacy for any $\epsilon > 0$.  Then for all $i \in \N$, $\Pi_i$ is
$\eps$-differentially private in bettor $i$'s report $p_i$.
\end{theorem}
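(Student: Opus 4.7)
The plan is to show that budget balance makes $\Pi_i$ a deterministic function of $\Pi_{-i}$, so that $\epsilon$-differential privacy of $\Pi_i$ in bettor $i$'s own report follows from $\epsilon$-joint differential privacy of the profit vector by post-processing.

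First, I would fix $i \in \N$, alternative reports $p_i, p_i' \in [0,1]$ for bettor $i$, reports $\p_{-i}$ for the rest, wagers $\vm \in \reals_+^n$, and an outcome $\om \in \{0,1\}$. For any measurable $T \subseteq \reals$, the goal is to establish
$$\Pr[\Pi_i((p_i, \p_{-i}), \vm, \om) \in T] \leq e^{\epsilon} \Pr[\Pi_i((p_i', \p_{-i}), \vm, \om) \in T].$$

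Second, I would invoke property (a): budget balance is required to hold for \emph{every} realization of the randomness in $\vPi$, not merely in expectation. Hence, pointwise in the randomness, $\Pi_i = -\sum_{j \neq i} \Pi_j$, which means $\Pi_i = f(\Pi_{-i})$ for the deterministic map $f(\x) = -\sum_k x_k$. Setting $S = f^{-1}(T) \subseteq \reals^{n-1}$, the events $\{\Pi_i(\cdot) \in T\}$ and $\{\Pi_{-i}(\cdot) \in S\}$ coincide on the underlying probability space, so
$$\Pr[\Pi_i((p_i, \p_{-i}), \vm, \om) \in T] = \Pr[\Pi_{-i}((p_i, \p_{-i}), \vm, \om) \in S],$$
and likewise with $p_i$ replaced by $p_i'$.

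Third, I would apply the $\epsilon$-joint differential privacy of $\vPi$ (property (h)) to the set $S \subseteq \reals^{n-1}$, which yields
$$\Pr[\Pi_{-i}((p_i, \p_{-i}), \vm, \om) \in S] \leq e^{\epsilon} \Pr[\Pi_{-i}((p_i', \p_{-i}), \vm, \om) \in S].$$
Chaining this with the event equality from the previous step gives the desired inequality on $\Pi_i$. The argument is essentially just post-processing, and there is no substantive obstacle; the only point requiring care is that the proof relies on budget balance being a pointwise identity (as property (a) demands) rather than an in-expectation constraint. If budget balance held only in expectation, $\Pi_i$ would not be a deterministic function of $\Pi_{-i}$ and the post-processing step would break down, so this distinction is the one place where the choice of axiom formulation matters.
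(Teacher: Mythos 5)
Your proof is correct and follows exactly the argument the paper sketches (and gives in a commented-out formal version): budget balance holds pointwise in the randomness, so $\Pi_i = -\sum_{j\neq i}\Pi_j$ is a deterministic post-processing of $\Pi_{-i}$, and the $\epsilon$-joint differential privacy bound transfers directly. Your added care about preimages $S = f^{-1}(T)$ and the observation that the argument would fail if budget balance held only in expectation make the reasoning fully rigorous, but the route is the same as the paper's.
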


\ignore{
\begin{proof}
  Assume such a mechanism exists, and bettor $i$ receives profit
  $\Pi_i(\p,\vm,\om)$.  Since the mechanism is $\eps$-jointly
  differentially private, then $\left\{ \Pi_j(\p,\vm,\om) \right\}_{j
    \neq i}$ must be $\eps$-differentially private in $p_i$ and $m_i$.
  Since the mechanism is budget-balanced, $\Pi_i(\p,\vm,\om) =
  -\sum_{j \neq i} \Pi_j(\p,\vm,\om)$.  Then bettor $i$'s profit
  $\Pi_i(\p,\vm,\om)$ is merely a post-processing of differentially
  private outputs, so it too must be $\eps$-differentially private in
  $p_i$ and $m_i$.
\end{proof}
}

Since it is unsatisfying to consider mechanisms in which a bettor's
profit is not sensitive to her own report, we require only that budget
balance hold in expectation over the randomness of the profit
function.  An operator who runs many markets may be content with such
a guarantee as it implies that he will not lose money on average.

\begin{itemize}[leftmargin=\msize]
\item[(a$'$)] {\bf Budget balance in expectation:} The operator neither
  makes a profit nor a loss in expectation, i.e., $\forall \p \in
  [0,1]^n$, $\forall \vm \in \reals^n_+$, $\forall \om \in \{0,1\}$,
$
\sum_{i =1}^n \E \left[ \Pi_i (\p, \vm, \om) \right] = 0.
$
\end{itemize}

\subsection{Private weighted score wagering mechanisms}

Motivated by the argument above, we seek a wagering mechanism that
simultaneously satisfies properties (a$'$) and (b)--(h).  Keeping
Theorem~\ref{thm:jdpandsens} in mind, we would also like the wagering
mechanism to be defined in such a way that each bettor $i$'s profit is
sensitive to her own report $p_i$.  Sensitivity is difficult to define
precisely, but loosely speaking, we would like it to be the case that
1) the magnitude of $\E \left[ \Pi_i (\p, \vm, \om) \right]$ varies
sufficiently with the choice of $p_i$, and 2) there is not too much
noise or variance in a bettor's profit, i.e., $\Pi_i (\p, \vm, \om)$
is generally not too far from $\E \left[ \Pi_i (\p, \vm, \om)
\right]$.

\ignore{
Before presenting such a mechanism, we first provide some intuition as
to why several more obvious approaches fail to yield satisfactory
mechanisms.
}

A natural first attempt would be to employ the standard Laplace
Mechanism~\citep{DR14} on top of a WSWM, adding independent Laplace
noise to each bettor's profit.  The resulting profit vector would
satisfy $\eps$-joint differential privacy, but since Laplace random
variables are unbounded, a bettor could lose more than her wager.
Adding other forms of noise does not help; to obtain differential
privacy, the noise must be unbounded \citep{DMNS06}.  Truncating a
bettor's profit to lie within a bounded range \emph{after} noise is
added could achieve privacy, but would result in a loss of
truthfulness as the bettor's expected profit would no longer be a
proper scoring rule.


\ignore{
\begin{paragraph}{Approach 2: Use individual scoring rules}
On the opposite end of the spectrum, one might try to provide privacy by designing profit functions for each bettor that depend only on her own report.  That is, bettor $i$ receives profit $\Pi_i(p_i, m_i, \om)$ that is independent of $\vp_{-i}$ and $\vm_{-i}$.  To incentivize truthfully reporting, this profit function would have to be a strictly proper scoring rule in expectation.  However, since each bettor is paid independently based upon her report, there is no way for the mechanism designer to coordinate profits to achieve budget-balance, even in expectation.  
\end{paragraph}
} 


\begin{algorithm}[h!]
 \caption{Private wagering mechanism}
  \label{alg:privwag}
  \begin{algorithmic}
    \STATE{Parameters: num bettors $n$, privacy param
      $\eps$, strictly proper scoring
      rule $s$ with range in $[0,1]$}
    \STATE{Fix $\alpha = 1 - e^{-\eps}$ and $\beta = e^{-\eps}$}
    \STATE{Solicit reports $\p$ and wagers $\vm$}
    \STATE{Realize state $\omega$}
    \FOR{$i=1, \ldots, n$}
    \STATE{Independently draw random variable $x_i(p_i, \omega)$ such that
      \[
      x_i(p_i, \omega) = \begin{cases}
        1  & \mbox{w.p. }\frac{\alpha s(p_i, \omega) + \beta}{1+\beta} \\ 
        - \beta & \mbox{w.p. } \frac{1 - \alpha s(p_i, \omega)}{1+\beta}
      \end{cases}
    \]
    \vspace{-5pt}
  }
    \ENDFOR
    \FOR{$i=1, \ldots, n$}    
    \STATE{Pay bettor $i$ 
      \vspace{-10pt}
      \[
      \Pi_i(\p, \vm, \om) = m_i \left(
        \alpha s(p_i, \om) - \frac{\sum_{j \in \N} m_j x_j(p_j,
          \om)}{\sum_{j \in \N} m_j } \right)
      \]
      \vspace{-10pt}
    }
    \ENDFOR
\end{algorithmic}
\end{algorithm}

Instead, we take a different approach.  Like the WSWM, our
\emph{private wagering mechanism}, formally defined in
Algorithm~\ref{alg:privwag}, rewards each bettor based on how good his
score is compared with an aggregate measure of how good bettors'
scores are on the whole.  However, this aggregate measure is now
calculated in a noisy manner.  That is, instead of comparing a
bettor's score to a weighted average of all bettors' scores, the
bettor's score is compared to a weighted average of random variables
that are equal to bettors' scores in expectation.  As a result, each
bettor's profit is, in expectation, equal to the profit she would
receive using a WSWM, scaled down by a parameter $\alpha$ to ensure
that no bettor ever loses more than her wager, as stated in the
following lemma.  The proof, which simply shows that for each $i$,
$\E[x_i(p_i, \omega)] = \alpha s(p_i, \omega)$, is in the appendix.

\begin{lemma}
  For any number of bettors $n>0$ with reports $\p \in [0,1]^n$ and
  wagers $\m \in \reals^n_+$, for any setting of the privacy parameter
  $\epsilon > 0$, for any outcome $\omega \in \{0,1\}$, the expected value of
  bettor $i$'s profit $\Pi_i(\p,\m,\omega)$ under the private wagering
  mechanism with scoring rule $s$ is equal to bettor $i$'s profit
  under a WSWM with scoring rule $\alpha s$.
\label{lem:expprofit}
\end{lemma}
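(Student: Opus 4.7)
The hint in the excerpt gives the whole game away: it suffices to show that $\E[x_i(p_i,\omega)] = \alpha\, s(p_i,\omega)$, after which linearity of expectation and an inspection of the two profit formulas finishes the argument. So my plan has three short steps.

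First, I would directly compute the expectation of the Bernoulli-like random variable $x_i(p_i,\omega)$ from its defining distribution. Writing
\[
\E[x_i(p_i,\omega)] \;=\; 1 \cdot \frac{\alpha s(p_i,\omega)+\beta}{1+\beta} \;+\; (-\beta)\cdot \frac{1-\alpha s(p_i,\omega)}{1+\beta},
\]
the constant terms $\beta$ and $-\beta$ cancel in the numerator, and the terms involving $\alpha s(p_i,\omega)$ combine to give a factor of $(1+\beta)$ in the numerator that cancels the denominator. The result is exactly $\alpha s(p_i,\omega)$, independently of the specific value of $\beta = e^{-\epsilon}$. This is the only real calculation in the proof.

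Second, I would apply linearity of expectation to the definition of $\Pi_i$ in Algorithm~\ref{alg:privwag}, noting that the $x_j(p_j,\omega)$ are drawn independently and that $\alpha s(p_i,\omega)$ is a deterministic quantity. Substituting $\E[x_j(p_j,\omega)] = \alpha s(p_j,\omega)$ yields
\[
\E[\Pi_i(\p,\vm,\omega)] \;=\; m_i\!\left(\alpha s(p_i,\omega) \;-\; \frac{\sum_{j\in\N} m_j\,\alpha s(p_j,\omega)}{\sum_{j\in\N} m_j}\right),
\]
which is precisely the WSWM profit from Algorithm~\ref{alg:wswm} with scoring rule $\alpha s$ in place of $s$.

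Third, as a brief sanity check, I would remark that $\alpha s$ is a legitimate scoring rule for a WSWM: since $\epsilon > 0$ we have $\alpha = 1 - e^{-\epsilon} \in (0,1)$, so $\alpha s$ has range contained in $[0,1]$, and scaling a strictly proper scoring rule by a positive constant preserves strict properness. I do not expect any real obstacle here; the only place to be careful is the algebraic cancellation in step one, and confirming that the factor $1/(1+\beta)$ arising from the denominators of the two branch probabilities really does get absorbed cleanly.
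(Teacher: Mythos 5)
Your proposal is correct and follows essentially the same route as the paper's own proof: compute $\E[x_i(p_i,\omega)] = \alpha s(p_i,\omega)$ by direct calculation, then apply linearity of expectation to the profit formula and recognize the result as the WSWM profit with scoring rule $\alpha s$. Your third step (checking that $\alpha s$ has range in $[0,1]$ and remains strictly proper) is a reasonable extra remark that the paper omits, but it is not needed for the lemma as stated.
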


Using this lemma, we show that this mechanism does indeed satisfy
joint differential privacy as well as the other desired properties.

\begin{theorem}
  The private wagering mechanism satisfies (a$'$) budget balance in
  expectation, (b) anonymity, (c) truthfulness, (d) individual
  rationality, (e) normality, (f) sybilproofness, (g) monotonicity,
  and (h) $\epsilon$-joint differential privacy.
\end{theorem}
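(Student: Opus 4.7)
The plan is to leverage Lemma~\ref{lem:expprofit} to reduce properties (a$'$), (c), (d), (e), (f), and (g) to the corresponding properties already established for WSWMs, then handle anonymity by symmetry, and finally prove joint differential privacy via the billboard lemma applied to a suitably constructed public message.

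First I would observe that by Lemma~\ref{lem:expprofit}, $\E[\Pi_i(\p,\vm,\om)]$ is exactly the (deterministic) WSWM profit computed with the scoring rule $\alpha s$. Since $\alpha > 0$, $\alpha s$ is itself a strictly proper scoring rule with range in $[0,\alpha] \subseteq [0,1]$, so the WSWM built from $\alpha s$ satisfies axioms (a)--(g) by the results of \citet{LLW+08}. Properties (c) truthfulness, (d) individual rationality, (e) normality, (f) sybilproofness, and (g) monotonicity are all defined in terms of expected profit only, and so transfer immediately from the WSWM with scoring rule $\alpha s$ to our randomized mechanism. Budget balance in expectation (a$'$) also follows immediately: summing $\E[\Pi_i]$ over $i$ yields the analogous telescoping cancellation that gives exact budget balance in the deterministic WSWM. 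Anonymity (b) follows from the symmetric construction: the noise variables $x_i(p_i,\om)$ are drawn independently from a distribution that depends only on $(p_i,\om)$, so permuting the bettors and their inputs simply permutes the joint distribution of profits. For (d) I would additionally verify that $\Pi_i \in [-m_i, m_i]$ always holds, so the bettor never loses more than her wager. This is a routine calculation: the term $\alpha s(p_i,\om)$ lies in $[0,\alpha]$, the weighted average $\sum_j m_j x_j(p_j,\om)/\sum_j m_j$ lies in $[-\beta, 1]$, and plugging in the extremes gives $\Pi_i \in [m_i(0-1),\, m_i(\alpha-(-\beta))] = [-m_i, m_i]$ using $\alpha + \beta = 1$.

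The main technical step is proving (h) joint differential privacy. I would use the billboard lemma with the public message
\[
M(\p,\vm,\om) \;=\; \frac{\sum_{j \in \N} m_j x_j(p_j,\om)}{\sum_{j \in \N} m_j},
\]
together with the publicly known wagers and outcome. Each bettor $j$'s profit $\Pi_j = m_j(\alpha s(p_j,\om) - M)$ is computable from $M$, $\om$, $\vm$, and $j$'s own report $p_j$. It thus suffices to show that $M$ is $\eps$-differentially private in any single report $p_i$. Since the $x_j$ are independent and only $x_i(p_i,\om)$ depends on $p_i$, and $M$ is a post-processing of $x_i$ together with the independent $(x_j)_{j \neq i}$, it is enough (by post-processing of differential privacy) to show that the single random variable $x_i(p_i,\om)$ is $\eps$-differentially private in $p_i$. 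This reduces to checking that for any $p_i, p_i' \in [0,1]$ and any $\om \in \{0,1\}$, both ratios
\[
\frac{\alpha s(p_i,\om) + \beta}{\alpha s(p_i',\om) + \beta}
\qquad \text{and} \qquad
\frac{1 - \alpha s(p_i,\om)}{1 - \alpha s(p_i',\om)}
\]
are bounded by $e^\eps$. Using $\alpha = 1 - e^{-\eps}$ and $\beta = e^{-\eps}$ together with $s(p,\om) \in [0,1]$, the first ratio is maximized at $(\alpha + \beta)/\beta = 1/e^{-\eps} = e^\eps$, and the second at $1/(1-\alpha) = e^\eps$, which completes the privacy proof.

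The hard part is really just setting up the billboard argument cleanly and checking that the calibration $\alpha = 1 - e^{-\eps}$, $\beta = e^{-\eps}$ gives exactly $e^\eps$ in both tail ratios; everything else either reduces to WSWM properties via Lemma~\ref{lem:expprofit} or is a straightforward symmetry/boundedness observation. I would present the proof in the order above: first the reduction of (a$'$), (c)--(g) to WSWM, then anonymity, then joint differential privacy as the main technical contribution.
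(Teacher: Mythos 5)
Your proposal is correct and follows essentially the same route as the paper: reduce (a$'$), (c)--(g) to the deterministic WSWM with scoring rule $\alpha s$ via Lemma~\ref{lem:expprofit}, handle anonymity by symmetry, and prove (h) by showing each $x_i(p_i,\omega)$ is $\eps$-differentially private in $p_i$ (the paper invokes parallel composition where you invoke post-processing with independent auxiliary randomness---equivalent here) and then applying the billboard lemma with the noisy aggregate as the public message. The only addition is your explicit check that $\Pi_i \in [-m_i, m_i]$ using $\alpha + \beta = 1$, which the paper asserts but does not spell out; this is a correct and welcome verification that the construction actually satisfies the $\Pi_i \geq -m_i$ constraint in the definition of a wagering mechanism.
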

\begin{proof}
  Any WSWM satisfies budget balance in expectation (by satisfying
  budget balance), truthfulness, individual rationality, normality,
  sybilproofness, and monotonicity~\cite{LLW+08}.  Since these
  properties are defined in terms of expected profit,
  Lemma~\ref{lem:expprofit} implies that the private wagering
  mechanism satisfies them too.

  Anonymity is easily observed since profits are defined
  symmetrically for all bettors.

  Finally we show $\epsilon$-joint differential privacy.  We first
  prove that each random variable $x_i(p_i, \om)$ is
  $\eps$-differentially private in bettor $i$'s report $p_i$ which
  implies that the noisy aggregate of scores is private in all
  bettors' reports.  We then apply the billboard lemma (see
  Section~\ref{sec:privacyprelims}) to show that the profit vector
  $\vPi$ satisfies joint differential privacy.

  To show that $x_i(p_i, \om)$ is differentially private in $p_i$, for
  each of the two values that $x_i(p_i, \om)$ can take on we must
  ensure that the ratio of the probability it takes this value under
  any report $p$ and the probability it takes this value under any
  alternative report $p'$ is bounded by $e^\epsilon$.  Fix any $\omega
  \in \{0,1\}$.  Since $s$ has range in
  $[0,1]$,
\[ \frac{\Pr(x_i(p, \om) = 1)}{\Pr(x_i(p', \om) = 1)} = \frac{\alpha
  s(p, \om) + \beta}{\alpha s(p', \om) + \beta} \leq \frac{\alpha +
  \beta}{\beta} = \frac{1 - e^{-\eps} + e^{-\eps}}{e^{-\eps}} =
e^{\eps}, \]
\[ \frac{\Pr(x_i(p, \om) = -\beta)}{\Pr(x_i(p', \om) = -\beta)} =
\frac{1 - \alpha s(p, \om)}{1 - \alpha s(p', \om)} \leq \frac{1}{1 -
  \alpha} = \frac{1}{1 - (1 - e^{-\eps})} = e^{\eps}. \]
Thus $x_i(p_i, \om)$ is $\epsilon$-differentially private in $p_i$.
By Theorem 4 of \citet{McS09}, the vector
$(x_1(p_1, \om), \ldots, x_n(p_n, \om))$ (and thus any function of
this vector) is $\eps$-differentially private in the vector $\vp$,
since each $x_i(p_i, \om)$ does not depend on the reports of anyone
but $i$.  Since we view the wagers $m_i$ as constants,
the quantity $\sum_{j \in \N} m_j x_j(p_j, \om) / \sum_{j \in \N} m_j$
is also $\epsilon$-differentially private in the reports $\vp$.  Call
this quantity $X$.

To apply the billboard lemma, we can imagine the operator publicly
announcing the quantity $X$ to the bettors.  
Given access to $X$, each bettor is able to calculate her own profit
$\Pi_i(\p,\m,\omega)$ using only her own input and the values $\alpha$
and $\omega$.
The billboard lemma implies that the vector of profits is
$\epsilon$-joint differentially private.
\end{proof}

\subsubsection{Sensitivity of the mechanism}

Having established that our mechanism satisfies properties (a$'$) and
(b)--(h), we next address the sensitivity of the mechanism in terms of
the two facets described above: range of achievable expected profits
and the amount of noise in the profit function. This discussion sheds
light on how to set $\epsilon$ in practice.

The first facet is quantified by Lemma~\ref{lem:expprofit}.  As
$\alpha$ grows, the magnitude of bettors' expected profits grows, and
the range of expected profits grows as well.  When $\alpha$ approaches
1, the range of expected profits achievable through the private
wagering mechanism approaches that of a standard WSWM with the same
proper scoring rule.

Unfortunately, since $\alpha = 1-e^{-\epsilon}$, larger values of
$\alpha$ imply larger values of the privacy parameter $\epsilon$.
This gives us a clear tradeoff between privacy and magnitude of
expected payments.  Luckily, in practice, it is probably unnecessary
for $\epsilon$ to be very small for most markets.  A relatively large
value of $\epsilon$ can still give bettors plausible deniability.  For
example, setting $\epsilon = 1$ implies that a bettor's report can
only change the probability of another bettor receiving a particular
profit by a factor of roughly $2.7$ and leads to
$\alpha \approx 0.63$, a tradeoff that may be considered acceptable in
practice.

The second facet is quantified in the following theorem, which states
that as more money is wagered by more bettors, each bettor's realized
profit approaches its expectation.  The bound depends on
$\| \vm \|_2 / \| \vm \|_1$.  If all wagers are equal, this quantity
is equal to $1/\sqrt{n}$ and bettors' profits approach their
expectations as $n$ grows.  This is not the case at the other extreme,
when there are a small number of bettors with wagers much larger than
the rest.  The proof, which uses Hoeffding's inequality to bound the
difference between the quantity $m_j x_j(p_j, \omega)$ and its
expectation,
is in the appendix.

\begin{theorem}\label{thm:conc}
  For any $\delta \in [0,1]$, any $\epsilon > 0$, any number of
  bettors $n > 0$, any vectors of reports $\p \in [0,1]^n$ and wagers
  $\vm \in \reals^n_+$, with probability at least $1-\delta$, for all
  $i \in \N$, the profit $\Pi_i$ output by the private wagering
  mechanism satisfies
\[ \left| \Pi_i(\p, \vm, \om) - \E[\Pi_i(\p, \vm, \om)] \right| 
\leq m_i \left( \frac{\| \vm \|_2}{ \| \vm \|_1} (1 + \beta) \sqrt{\frac{ \ln{(2/\delta)}}{2}} \right). \]
\end{theorem}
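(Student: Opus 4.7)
The plan is to reduce the concentration statement to a single application of Hoeffding's inequality applied to the weighted sum of the independent noise variables $x_j(p_j,\omega)$.

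First I would rewrite the deviation in a more convenient form. Since $\alpha s(p_i,\omega)$ and $m_i$ are deterministic, all randomness in $\Pi_i(\p,\vm,\omega)$ sits in the common term $\sum_{j \in \N} m_j x_j(p_j,\omega)$. Writing $W = \sum_{j \in \N} m_j$ and expanding the definition from Algorithm~\ref{alg:privwag},
\[
\Pi_i(\p,\vm,\omega) - \E[\Pi_i(\p,\vm,\omega)]
= -\,\frac{m_i}{W}\left(\sum_{j \in \N} m_j x_j(p_j,\omega) - \sum_{j \in \N} m_j \E[x_j(p_j,\omega)]\right).
\]
The key observation is that the random quantity inside the parentheses is the same for every bettor $i$; only the prefactor $m_i/W$ depends on $i$. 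Hence a single high-probability event suffices to give the bound simultaneously for all $i \in \N$, with no union bound required.

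Next I would identify the ranges of the summands needed for Hoeffding's inequality. Each $x_j(p_j,\omega)$ takes values in $\{1,-\beta\}$ by construction, so $m_j x_j(p_j,\omega) \in [-\beta m_j, m_j]$ is bounded in an interval of length $(1+\beta) m_j$. The $x_j$ are mutually independent across $j$ since Algorithm~\ref{alg:privwag} draws them independently. Applying Hoeffding's inequality to the independent, bounded random variables $\{m_j x_j(p_j,\omega)\}_{j \in \N}$ yields
\[
\Pr\!\left[\left|\sum_{j \in \N} m_j x_j(p_j,\omega) - \sum_{j \in \N} m_j \E[x_j(p_j,\omega)]\right| \geq t\right] \leq 2\exp\!\left(-\frac{2 t^2}{(1+\beta)^2 \|\vm\|_2^2}\right).
\]

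Finally I would set the right-hand side equal to $\delta$ and solve for $t$, giving $t = (1+\beta)\|\vm\|_2 \sqrt{\ln(2/\delta)/2}$. Dividing through by $W = \|\vm\|_1$ and multiplying by $m_i$ (as required by the form of the deviation derived in the first step) yields exactly the claimed bound $m_i (\|\vm\|_2/\|\vm\|_1)(1+\beta)\sqrt{\ln(2/\delta)/2}$, and because this bound on the common random quantity controls $|\Pi_i - \E[\Pi_i]|$ for every $i$ at once, the ``for all $i \in \N$'' conclusion follows from the single concentration event. The whole argument is essentially a one-line invocation of Hoeffding once the right reformulation is in place; there is no real obstacle, only the bookkeeping to verify that the random variables are independent and that the range $(1+\beta)m_j$ produces the factor $\|\vm\|_2$ rather than $\|\vm\|_1$ in the variance-proxy sum.
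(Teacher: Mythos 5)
Your proof is correct and matches the paper's argument essentially line for line: both isolate the common random term $\sum_j m_j x_j(p_j,\omega)$, bound each summand in $[-\beta m_j, m_j]$, and apply Hoeffding's inequality once, observing that the same high-probability event controls all $i$ simultaneously so no union bound is needed. The only cosmetic difference is that you make the "no union bound required" observation explicit, whereas the paper leaves it implicit.
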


The following corollary shows that if all wagers are bounded in some
range $[L,U]$, profits approach their expectations as the number of
bettors grows.

\begin{corollary}
  Fix any $L$ and $U$, $0 < L < U$.  For any $\delta \in [0,1]$, any
  $\epsilon > 0$, any $n > 0$, any vectors of reports $\p \in [0,1]^n$
  and wagers $\vm \in [L,U]^n$, with probability at least $1-\delta$,
  for all $i \in \N$, the profit $\Pi_i$ output by the private
  wagering mechanism satisfies
\[ 
\left| \Pi_i(\p, \vm, \om) - \E[\Pi_i(\p, \vm, \om)] \right| \leq m_i
\left( \frac{U}{\sqrt{n} L} (1 + \beta) \sqrt{\frac{
      \ln{(2/\delta)}}{2}} \right).
\]
\end{corollary}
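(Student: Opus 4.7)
The plan is to derive the corollary as a direct consequence of Theorem~\ref{thm:conc} by bounding the ratio $\|\vm\|_2 / \|\vm\|_1$ under the additional assumption that every wager lies in $[L, U]$. Theorem~\ref{thm:conc} already gives the concentration bound in terms of this ratio and holds for any $\vm \in \reals^n_+$, so no probabilistic argument needs to be redone; the content of the corollary is purely a deterministic estimate on the norm ratio.

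First I would observe that since $m_i \geq L$ for every $i$, we have $\|\vm\|_1 = \sum_{i=1}^n m_i \geq nL$. Then, since $m_i \leq U$ for every $i$, we have $\|\vm\|_2 = \sqrt{\sum_{i=1}^n m_i^2} \leq \sqrt{n U^2} = U\sqrt{n}$. Combining these two bounds yields
\[
\frac{\|\vm\|_2}{\|\vm\|_1} \;\leq\; \frac{U\sqrt{n}}{nL} \;=\; \frac{U}{\sqrt{n}\, L}.
\]

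Finally, I would plug this inequality into the right-hand side of the bound in Theorem~\ref{thm:conc}. Since the factor $m_i (1+\beta)\sqrt{\ln(2/\delta)/2}$ is nonnegative, replacing $\|\vm\|_2/\|\vm\|_1$ by the larger quantity $U/(\sqrt{n}L)$ preserves the inequality, and we obtain the stated bound with probability at least $1-\delta$ simultaneously for all $i \in \N$. There is no real obstacle here: the only subtlety worth double-checking is that the norm ratio estimate is tight enough to give the advertised $1/\sqrt{n}$ rate, which it is, so the corollary follows immediately.
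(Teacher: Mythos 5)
Your proof is correct and matches the paper's own argument exactly: bound $\|\vm\|_1 \geq nL$ and $\|\vm\|_2 \leq U\sqrt{n}$, combine to get $\|\vm\|_2/\|\vm\|_1 \leq U/(\sqrt{n}L)$, and substitute into Theorem~\ref{thm:conc}. Nothing to add.
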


\ignore{
\begin{proof}
Observe: $\frac{\| \vm \|_2}{ \| \vm \|_1} \leq \frac{\sqrt{n U^2}}{n L} \leq \frac{U}{\sqrt{n} L}$.  Then the bound follows immediately from Theorem \ref{thm:conc}.
\end{proof}
}

%

\subsubsection{Keeping wagers private}
\label{sec:privwagers}

Property (h) requires that bettors' reports be kept private but does
not guarantee private wagers.  The same tricks used in our private
wagering mechanism could be applied to obtain a privacy guarantee for
both reports and wagers if wagers are restricted to lie in a bounded
range $[L, U]$, but this would come with a great loss in sensitivity.
Under the most straightforward extension, the parameter $\alpha$
would need to be set to $(L/U) (1-e^{-\epsilon/n})$ rather than
$(1-e^{-\epsilon})$, greatly reducing the scale of achievable profits
and thus making the mechanism impractical in most settings.

Loosely speaking, the extra factor of $L/U$ stems from the fact that a
bettor's effect on the profit of any other bettor must be roughly the
same whether he wagers the maximum amount or the minimum.  The poor
dependence on $n$ is slightly more subtle. We
created a private-belief mechanism by replacing each bettor $j$'s score
$s(p_j, \omega)$ in the WSWM with a random variable $x_j(p_j, \omega)$
that is $\epsilon$-differentially private in $p_j$.  To obtain private
wagers, we would instead need to replace the full term
$m_j s(p_j, \omega) / \sum_{k \in N} m_k$ with a random variable for
each $j$.  This term depends on the wagers of \emph{all} $n$ bettors
in addition to $p_j$.  Since each bettor's profit would depend on $n$
such random variables, achieving $\epsilon$-joint differential privacy
would require that each random variable be $\epsilon/n$-differentially
private in each bettor's wager.

We believe that sacrifices in sensitivity are unavoidable and not
merely an artifact of our techniques and analysis, but leave a formal
lower bound to future work.

\section{Limits of Privacy with Cost-Function Market Makers} 
\label{sec:costfuncs}

In practice, prediction markets are often run using dynamic mechanisms
that update in real time as new information surfaces. We now turn to
the problem of adding privacy guarantees to continuous-trade
markets. We focus our attention on cost-function prediction markets,
in which all trades are placed through an automated market maker
\citep{H03,CP07,ACV13}.  The market maker can be viewed as a streaming
algorithm that takes as input a stream of trades and outputs a
corresponding stream of market states from which trade prices can be
computed.  Therefore, the privacy guarantees we seek are in the form
of Definition~\ref{def:streamdp}.  We ask whether it is possible for
the automated market maker to price trades according to a cost
function while maintaining $\epsilon(t)$-differential privacy without
opening up the opportunity for traders to earn unbounded profits,
leading the market maker to experience unbounded loss.  We show a
mostly negative result: to achieve bounded loss, the privacy term
$e^{\epsilon(t)}$ must grow faster than linearly in $t$, the number of
rounds of trade.

For simplicity, we state our results for markets over a single binary
security, though we believe they extend to cost-function markets over
arbitrary security spaces.

\subsection{Standard cost-function market makers}

We consider a setting in which there is a single binary security that
traders may buy or sell. After the outcome $\omega \in \{0,1\}$ has
been revealed, a share of the security is worth \$1 if $\omega = 1$
and \$0 otherwise. A cost-function prediction market for this security
is fully specified by a convex function $C$ called the \emph{cost
  function}. Let $x_t$ be the number of shares that are bought or sold
by a trader in the $t$th transaction; positive values of $x_t$
represent purchases while negative values represent (short) sales. The
market state after the first $t-1$ trades is summarized by a single
value $q_t = \sum_{\tau=1}^{t-1} x_\tau$, and the $t$th trader is
charged $C(q_t + x_t) - C(q_t) = C(q_{t+1}) - C(q_{t})$.  Thus the
cost function can be viewed as a potential function, with
$C(q_{t+1}) - C(0)$ capturing the amount of money that the market
maker has collected from the first $t$ trades. The \emph{instantaneous
  price} at round $t$, denoted $p_t$, is the price per share of
purchasing an infinitesimally small quantity of shares:
$p_t = C'(q_t)$.
This framework is summarized in Algorithm~\ref{alg:cost}.

\ignore{
Algorithm \ref{alg:cost} shows first the standard cost-function
prediction market. At every time $t$, there is a current \emph{market
  state} $q_t = \sum_{i=1}^{t-1} x_i$, which is the quantity of shares
in the market when the $t$-th trader arrives at the market. We assume
the market maker sets prices according to some convex \emph{cost
  function} $C(\cdot)$; the cost of making trade $x_t$ when the market
state is $q_t$ is: $C(q_t + x_t) - C(q_t)$. The instantaneous
\emph{price} $p_t$ at state $q_t$ is the cost of making an
infinitesimally small trade: $p_t = C'(q_t)$.
} 

\begin{algorithm}[h!]
  \caption{Cost-function market maker (parameters: cost function $C$)}\label{alg:cost}
  \begin{algorithmic}
  \STATE{\textbf{Initialize:} $q_1=0$}
    \FOR{$t=1, 2, \ldots$}
  \STATE{Update instantaneous price $p_t = C'(q_t)$ }
\STATE{A trader buys $x_t \in \reals$ shares and pays $C(q_t + x_t) - C(q_t)$}
 \STATE{Update market state $q_{t+1} = q_t + x_t$}
  \ENDFOR
  \STATE{Realize outcome $\om$}
  \IF{$\om = 1$}
  \FOR{$t=1, 2, \ldots$}
  \STATE{Market maker pays $x_t$ to the trader from round $t$}
  \ENDFOR
  \ENDIF
\end{algorithmic}
\end{algorithm}

The most common cost-function market maker is Hanson's log market
scoring rule (LMSR) \citep{H03}. The cost function for the
single-security version of LMSR can be written as
$C(q) = b \log(e^{(q+a)/b} + 1)$ where $b > 0$ is a parameter
controlling the rate at which prices change as trades are made and $a$
controls the initial market price at state $q=0$. The instantaneous
price at any state $q$ is $C'(q) = e^{(q+a)/b}/(e^{(q+a)/b} + 1)$.

Under mild conditions on $C$, all cost-function market makers satisfy
several desirable properties, including natural notions of
no-arbitrage and information incorporation~\citep{ACV13}.  We refer to
any cost function $C$ satisfying these mild conditions as a
\emph{standard cost function}.
Although the market maker subsidizes trade, crucially its worst-case
loss is bounded.  This ensures that the market maker does not go
bankrupt, even if traders are perfectly informed.  Formally, there
exists a finite bound $B$ such that for any $T$, any sequence of
trades $x_1, \ldots, x_T$, and any outcome $\omega \in \{0,1\}$,
\[
q_{T+1} \cdot \one(\omega = 1) - (C(q_{T+1}) - C(0)) \leq B, 
\]
where $\one$ is the indicator function that is 1 if its argument is
true and 0 otherwise. The first term on the left-hand side is the
amount that the market maker must pay (or collect from) traders when
$\omega$ is revealed. The second is the amount collected from
traders. For the LMSR with initial price $p_1 = 0.5$ ($a = 0$), the
worst-case loss is $b \log(2)$.

\subsection{The noisy cost-function market maker}

Clearly the standard cost-function market maker does not ensure
differential privacy. The amount that a trader pays is a function of
the market state, the sum of all past trades. Thus anyone observing
the stream of market prices could infer the exact sequence of past
trades.
To guarantee privacy while still approximating cost-function pricing,
the marker maker would need to modify the sequence of published prices
(or equivalently, market states) to ensure that such information
leakage does not occur.

In this section, we define and analyze a \emph{noisy} cost-function
market maker. The noisy market maker prices trades according to a cost
function, but uses a noisy version of the market state in order to
mask the effect of past trades. In particular, the market maker
maintains a noisy market state $q_t' = q_t + \eta_t$, where $q_t$ is
the true sum of trades and $\eta_t$ is a (random) noise term. The cost
of trade $x_t$ is $C(q_t' + x_t) - C(q_t')$, with the instantaneous
price now $p_t = C'(q_t')$.  Since the noise term $\eta_t$ must be
large enough to mask the trade $x_t$, we limit trades to be some
maximum size $k$. A trader who would like to buy or sell more than $k$
shares must do this over multiple rounds.  The full modified framework
is shown in Algorithm~\ref{alg:costpriv}.  For now we allow the noise
distribution $\dist$ to depend arbitrarily on the history of trade.
This framework is general; the natural adaptation of the
privacy-preserving data market of \citet{WFA15} to the single security
prediction market setting would result in a market maker of this form,
as would a cost-function market that used existing private streaming
techniques for bit counting~\citep{CSS11, DNPR10} to keep noisy,
private counts of trades.

\ignore{
We modify this setting in Algorithm \ref{alg:costpriv} to ensure
differential privacy by maintaining a noisy version of the market
state, $q_t' = q_t + \eta_t$, where $\eta_t$ is a (random) noise term.
Prices are computed based on this noisy market state: the cost of
trade $x_t$ is $C(q_t' + x_t) - C(q_t')$, and the instantaneous price
is $p_t = C'(q_t')$. After a trade, the true quantity will be updated
to $q_{t+1} = q_t + x_t$, the noisy quantity will be updated to
$q_{t+1}' = q_{t+1} + \eta_{t+1}$ with a fresh noise term
$\eta_{t+1}$. \todo{Update the next sentence based on the final proof
  once it's finished} The distribution of $\eta_t$ can depend on the
time step $t$, the most recent market state $q_t$ and trade and $x_t$,
and on realizations of past noise terms, $\eta_1, \ldots, \eta_{t-1}$.
} 

\begin{algorithm}[h!]
  \caption{Noisy cost-function market maker (parameters: cost function $C$, distribution $\dist$
    over noise $\{ \eta_t \}$, maximum trade size $k$)}\label{alg:costpriv}
  \begin{algorithmic}
  \STATE{\textbf{Initialize:} $q_1=0$} 
  \STATE{Draw $\eta_1$ and set $q_1' = \eta_1$}
    \FOR{$t=1, 2, \ldots$}
 \STATE{Update instantaneous price $p_t = C'(q_t')$ }
 \STATE{A trader buys $x_t \in [-k, k]$ shares and pays $C(q_t' + x_t) - C(q_t')$}
 \STATE{Update true market state $q_{t+1} = q_t + x_t$}
 \STATE{Draw $\eta_{t+1}$ and update noisy market state $q_{t+1}' = q_{t+1} + \eta_{t+1}$}
 \ENDFOR
  \STATE{Realize outcome $\om$}
  \IF{$\om = 1$}
  \FOR{$t=1, 2, \ldots$}
  \STATE{Market maker pays $x_t$ to the trader from round $t$}
  \ENDFOR
  \ENDIF
\end{algorithmic}
\end{algorithm}


In this framework, we can interpret the market maker as implementing a
noise trader in a standard cost-function market. Under this
interpretation, after a (real) trader purchases $x_t$ shares at state
$q'_t$, the market state momentarily moves to
$q'_t + x_t = q_t + \eta_t + x_t = q_{t+1} + \eta_t$. The market
maker, acting as a noise trader, then effectively ``purchases''
$\eta_{t+1} - \eta_t$ shares at this state for a cost of
$C((q_{t+1} + \eta_t) + (\eta_{t+1} - \eta_t)) - C( q_{t+1} + \eta_t)
= C(q_{t+1} + \eta_{t+1}) - C(q_{t+1} + \eta_t)$,
bringing the market state to $q_{t+1} + \eta_{t+1} = q'_{t+1}$. The
market maker makes this trade regardless of the impact on its own
loss. These noise trades obscure the trades made by real traders,
opening up the possibility of privacy.

However, these noisy trades also open up the opportunity for traders
to profit off of the noise. For the market to be practical, it is
therefore important to ensure that the property of bounded worst-case
loss is maintained. For the noisy cost-function market maker, for any
sequence of $T$ trades $x_1, \ldots, x_T$, any outcome $\omega \in
\{0,1\}$, and any \emph{fixed} noise values $\eta_1, \ldots, \eta_T$,
the loss of the market maker is
\[
L_T(x_1, \ldots, x_T, \eta_1, \ldots, \eta_T, \omega) \equiv
q_{T+1} \cdot \one(\omega = 1) - \sum_{t =1}^T \left( C(q'_t +
  x_t) - C(q'_t)  \right).
\]
As before, the first term is the (possibly negative) amount that the
market maker pays to traders when $\omega$ is revealed, and the second
is the amount collected from traders (which no longer telescopes).
Unfortunately, we cannot expect this loss to be bounded for \emph{any}
noise values; the market maker could always get extremely unlucky and
draw noise values that traders can exploit.  Instead, we consider a
relaxed version of bounded loss which holds in expectation with
respect to the noise values $\eta_t$.

In addition to this relaxation, one more modification is necessary.
Note that traders can (and should) base their actions on the current
market price. Therefore, if our loss guarantee only holds in
expectation with respect to noise values $\eta_t$, then it is no
longer sufficient to give a guarantee that is worst case over any
sequences of trades. Instead, we allow the sequence of trades to
depend on the realized noise, introducing a game between traders and
the market maker. To formalize this, we imagine allowing an adversary
to control the traders. We define the notion of a \emph{strategy} for
this adversary.

\begin{mydef}[Trader strategy]\label{def.strategy}
A \emph{trader strategy} $\s$ is a set of (possibly randomized)
functions $\s = \{s_1, s_2, \ldots\}$, with each $s_t$ mapping a
history of trades and noisy market states $(x_1, \ldots,
x_{t-1}, q'_1, \ldots, q'_t)$ to a new trade $x_t$ for the
trader at round $t$.
\end{mydef}

\ignore{
  A trader strategy $\s$ is a collection of (possibly
  randomized) functions $\s = (s_1, s_2, \ldots)$, each of the form
  $s_t: \mathbb{R}^{t-1} \times (\mathbb{R} \cup \bot)^{t-1} \to
  \mathbb{R}$. A trader following strategy $\s$ at time $t$ would
  buy: $x_t = s_t \left( (\cM_1(x), \ldots, \cM_{t-1}(x) ),
    (y_1, \ldots, y_{t-1}) \right)$ shares when the mechanism had
  previously output $(\cM_1(x), \ldots, \cM_{t-1}(x) )$, and where
  $y_i = x_i$ if this trader made the trade at time $t$, and $y_i =
  \bot$ otherwise. 
} 

Let $S$ be the set of all strategies. With this definition
in place, we can formally define what it means for a noisy
cost-function market maker to have bounded loss.

\begin{mydef}[Bounded loss for a noisy cost-function market
  maker]\label{def.unbounded}
  A noisy cost-function market maker with cost function $C$ and
  distribution $\dist$ over noise values $\eta_1, \eta_2, \ldots$ is
  said to have \emph{bounded loss} if there exists a finite $B$ such
  that for all strategies $\s \in S$, all times $T \geq 1$,
  and all $\omega \in \{0,1\}$,
\[
\E
\left[L_T(x_1, \ldots, x_T, \eta_1, \ldots, \eta_T, \omega)\right]
\leq B,
\]
where the expectation is taken over the market's noise values
$\eta_1, \eta_2, \ldots$ distributed according to $\dist$ and the
(possibly randomized) actions $x_1, x_2, \ldots$ of a trader employing
strategy $\s$.  In this case, the loss of the market maker is said to
be \emph{bounded by} $B$.  The noisy cost-function market maker has
\emph{unbounded loss} if no such $B$ exists.
\end{mydef}

If the noise values were deterministic, this definition of worst-case
loss would correspond to the usual one, but
because traders react intelligently to the specific realization of
noise, we must define worst-case loss in game-theoretic terms.

\ignore{
\begin{mydef}[Unbounded loss for a noisy cost-function market
  maker]\label{def.unbounded}
  A noisy cost-function market maker with cost function $C$ and
  distribution $\dist$ over noise values $\eta_1, \eta_2, \ldots$ is
  said to have \emph{unbounded loss} if for any $B > 0$, there exists
  a strategy $\s \in \mathcal{S}$, time $T \geq 1$, and $\omega \in
  \{0,1\}$, such that
\[
\E_{\vx \sim \s, \; \eta \sim \dist}
\left[q_{T+1} \cdot \one(\omega = 1) - \sum_{t =1}^T \left( C(q_t + \eta_t +
  x_t) - C(q_t + \eta_t)  \right)\right] > B.
\]
If for some $B$ this fails to hold, the noisy cost-function market
maker has \emph{bounded loss} with loss bound $B$.
\end{mydef}
} 

\ignore{
\begin{mydef}[Worst-case loss of a noisy cost-function market
  maker]\label{def.wcl}
The worst-case loss of a noisy cost-function market
  maker with cost function $C$ and distribution $\dist$ over noise
  values $\eta_1, \eta_2, \ldots$ is 
\[
\sup_{\omega \in \{0,1\}} 
\sup_{\s \in \mathcal{S}}
\E_{\vx \sim \s, \; \eta \sim \dist, \; T}
\left[q_{T+1} \cdot \one(\omega = 1) - \sum_{t =1}^T \left( C(q_t + \eta_t +
  x_t) - C(q_t + \eta_t)  \right)\right].
\]
\end{mydef}
} 

\ignore{ 
  A trader's profit from following strategy $\s$ for $T$ rounds, when
  noise terms $\{ \eta_t\}$ are drawn from according to $G$ \rc{This
    is abuse of notation because the sequence of $x$ and $\eta$ can
    depend on each other} is:
\[ \E[\Prof((\s, T), \eta, \om)] = \E_{\vx \sim \s, \; \eta \sim G}
\left[ \Prof(\vx, \eta, \om) \right] = \mathbbm{1}_{\om=1}
\sum_{t=1}^T \E[x_t] - \sum_{t =1}^T \E \left[ C(q_t + \eta_t + x_t) -
  C(q_t + \eta_t) \right] \]
} 

\ignore{
In the proof that follows, it is convenient to think of the adversary
as a single trader who is trading alone in the market. This trader’s
profit is then precisely the loss of the market maker. We make use of
the following simple lemma which states that if this single trader is
able to obtain unbounded \emph{expected} profit with respect to any
distribution over $\omega$, then the market maker has unbounded loss.
This follows from the fact
that for any probability $p$ that $\omega = 1$ and any random variable
$X$ (in this case representing the trader's profit, or equivalently,
market maker's loss), for any $B$, if 
\[
p \E[X |  \omega = 1] + (1-p) \E[X | \omega = 0] > B
\]
then either $\E[X | \omega = 1] > B$ or $\E[X | \omega = 0] > B$.
}

\ignore{
  Lemma \ref{lem.strategy} allows us to reason only about the profit
  of a single trader.  This will be useful later when considering
  traders following target strategy $\s^q$, where $p = C'(q)$.  Such
  a trader always buys shares at prices below $p$ and sells at prices
  above $p$, which means that her expected profit (with respect to the
  distribution of $\om$ where $\Pr[\om = 1] = p$) for any non-zero
  trade is positive.  In the next sections, we show that strategy
  $\s^q$ results in unbounded expected profit with respect to this
  distribution of $\om$.  \rc{Changed my previous comment to the
    sentence above.  Is this better?}
}


\subsection{Limitations on privacy}

By effectively acting as a noise trader, a noisy cost-function market
maker can partially obscure trades. Unfortunately, the amount of
privacy achievable through this technique is limited. In this section,
we show that in order to simultaneously maintain bounded loss and
achieve $\epsilon(t)$-differential privacy, the quantity
$e^{\epsilon(t)}$ must grow faster than linearly as a function of the
number of rounds of trade.

Before stating our result, we explain how to frame the market maker
setup in the language of differential privacy. Recall from
Section~\ref{sec:privstreams} that a differentially private unbounded
streaming algorithm $\cM$ takes as input a stream $\sigma$ of
arbitrary length and outputs a stream of values that depend on
$\sigma$ in a differentially private way. In the market setting, the
stream $\sigma$ corresponds to the sequence of trades
$\x = (x_1, x_2, \ldots )$.  We think of the noisy cost-function
market maker (Algorithm \ref{alg:costpriv}) as an algorithm $\cM$
that, on any stream prefix $(x_1, \ldots, x_t)$, outputs the noisy
market states $(q'_1, \ldots, q'_{t+1})$.\footnote{Announcing $q'_t$
  allows traders to infer the instantaneous price $p_t = C'(q'_t)$. It
  is equivalent to announcing $p_t$ in terms of information revealed
  as long as $C$ is strictly convex in the region around $q'_t$.}
The goal is to find a market maker such that $\cM$ is
$\epsilon(t)$-differentially private.

One might ask whether it is necessary to allow the privacy guarantee
to diminish as the the number of trades grows.  When considering the
problem of calculating noisy sums of bit streams, for example,
\citet{CSS11} are able to maintain a fixed privacy guarantee as their
stream grows in length by instead allowing the accuracy of their
counts to diminish.  This approach doesn't work for us; we cannot
achieve bounded loss yet allow the market maker's loss to grow with
the number of trades.

Our result relies on one mild assumption on the distribution $\dist$
over noise.  In particular, we require that the noise $\eta_{t+1}$ be
chosen independent of the current trade $x_t$.~\footnote{The proof can
  be extended easily to the more general case in which the calculation
  of $\eta_{t+1}$ is differentially private in $x_t$; we make the
  slightly stronger assumption to simplify presentation.}  We refer to
this as the \emph{trade-independent noise assumption}.  The
distribution of $\eta_{t+1}$ may still depend on the round $t$, the
history of trade $x_1, \ldots, x_{t-1}$, and the realizations of past
noise terms, $\eta_1, \ldots, \eta_{t}$. 
%
This assumption is needed in the proof only to rule out unrealistic
market makers that are specifically designed to monitor and infer the
behavior of the specific adversarial trader that we consider, and the
result likely holds even without it.  However, it is not a terribly
restrictive assumption as most standard ways of generating noise could
be written in this form.  For example, \citet{CSS11} and
\citet{DNPR10} show how to maintain a noisy count of the number of
ones in a stream of bits.
Both achieve this by computing the exact count and adding noise that
is correlated across time but independent of the data.  If similar
ideas were used to choose the noise term in our setting, the
trade-independent noise assumption would be satisfied.
\ignore{
The foundational tool for achieving differential privacy in a static
(non-streaming) setting is the Laplace Mechanism of \citet{DMNS06},
which adds independent noise to the value of a single statistic.  Both
\citet{CSS11} and \citet{DNPR10} give a natural extension this idea to
a streaming setting by adding noise terms that are independent of the
stream at each time $t$.  If either algorithm was used to determine
the noise terms $\eta_t$ within a noisy cost-function market maker,
the market would satisfy the trade-independent noise assumption.  
}
The noise employed in the mechanism of \citet{WFA15} also satisfies
this assumption.  Our impossibility result then implies that their
market would have unbounded loss if a limit on the number of rounds of
trade were not imposed. To obtain privacy guarantees,
\citeauthor{WFA15} must assume that the number of trades is known in
advance and can therefore be used to set relevant market parameters.

\ignore{
In addition to satisfying differential privacy, one would hope the mechanism $\cM$ also provided output that was relevant and informative about the database $x$.   We observe here that if a trader is allowed to make arbitrarily large trades (i.e. $x_t$ unbounded), then this task is hopeless.  Intuitively, the $\eta_t$ terms must be large enough to mask a single trade $x_t$, but if $x_t$ can be unbounded, then the added noise must also be unbounded most of the time.  Differential privacy guarantees that the noisy output $\cM_{t+1}(x)$ must be similar for any trade $x_t \in \mathbb{R}$, so the noisy state cannot be within a finite radius of the true state with any positive probability.  \rc{We have a formal statement of this that has been commented out}  The main negative result in the next section considers players who can only trade a bounded number of shares per trade (i.e. $x_t$ bounded).  Based on the above observation, this restriction only strengthens the impossibility result.  \rc{I would like to eventually prove this claim for formally.  The way it's written now is too hand-wavy.  Maybe it will naturally arise in the proof once it's finished.}
} 

We now state the main result.

\begin{theorem}
  Consider any noisy cost-function market maker using a standard
  convex cost function $C$ that is nonlinear in some region, a noise
  distribution $\dist$ satisfying the trade-independent noise
  assumption, and a bound $k > 0$ on trade size.  If the market maker
  satisfies bounded loss, then it cannot satisfy
  $(\epsilon(t), \delta)$-differential privacy for any function
  $\epsilon$ such that $e^{\epsilon(t)} = O(t)$ with any
  constant $\delta \in [0,1)$.
\label{thm:unboundedloss}
\end{theorem}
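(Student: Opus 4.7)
The plan is to construct an adversarial trader whose expected profit diverges with $T$ whenever the market maker has $(\eps(t), \delta)$-DP with $e^{\eps(t)} = O(t)$. The adversary I would use simply buys the maximum quantity $k$ at every round, and I then condition on $\omega = 1$. Define $f(q) := k - [C(q+k) - C(q)]$; convexity of $C$, nonlinearity in some region, and $C'(q) \leq 1$ make $f$ nonnegative, bounded by $k$, strictly positive at every finite $q$, and tending to $0$ only as $q \to \infty$. The adversary's expected profit over the first $T$ rounds equals the market maker's expected loss, namely $\sum_{t=1}^T \E[f(q'_t)]$. Bounded loss by $B$ therefore forces $\sum_{t=1}^\infty \E[f(q'_t)] \leq B$, so in particular $\E[f(q'_t)]$ must be summable.

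Next, I would use DP together with the trade-independent noise assumption to lower-bound $\E[f(q'_t)]$. The key tool is a one-step coupling: for two trade streams that agree everywhere except at a single position $\tau < t$, the trade-independent noise assumption couples $\eta_{\tau+1}$ so that $q'_{\tau+1}$ shifts deterministically by the trade difference (at most $2k$). Combined with the single-position $(\eps(t), \delta)$-DP guarantee on the output sequence, this yields a smoothness inequality for the distribution of $q'_t$ under the always-buy strategy: $\Pr[q'_t \in S+2k] \leq e^{\eps(t)} \Pr[q'_t \in S] + \delta$ for every measurable set $S$, and symmetrically. Since nonlinearity of $C$ forces $f$ to decrease by at least a constant $c_0 > 0$ when its argument moves right by $2k$ inside the nonlinear region, applying the smoothness inequality to a threshold set of the form $\{q : f(q) \leq \theta\}$ shows that mass on regions where $f$ is small must be matched, up to a factor $e^{\eps(t)}$, by mass on a shifted region where $f$ is at least $\theta + c_0$. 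This yields $\E[f(q'_t)] \geq c / e^{\eps(t)}$ for some constant $c > 0$ depending on $C$, $k$, and $\delta$, after restricting to a high-probability sub-event to absorb the additive $\delta$ term (which requires $\delta < 1$).

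Combining the two pieces completes the argument: when $e^{\eps(t)} = O(t)$, we get $\E[f(q'_t)] = \Omega(1/t)$ and hence $\sum_t \E[f(q'_t)] = \Omega(\log T) \to \infty$, ruling out bounded loss.

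The main obstacle is the smoothness-based lower bound in the second step. A naive group-privacy comparison between the always-buy stream and, say, the all-zero stream (which differ in $t$ positions) blows up as $e^{t\eps(t)} = \Omega(t^t)$, far too weak. The trade-independent noise assumption is essential precisely because it reduces the relevant comparison to a single-position DP invocation, producing a blow-up of only $e^{\eps(t)}$ rather than $e^{t\eps(t)}$. Handling the additive $\delta$ without amplifying it across iterations is the other delicate point; this is why the statement restricts to constant $\delta \in [0,1)$.
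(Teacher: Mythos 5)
Your Step 1 (the one-step coupling combining trade-independent noise with a single-position DP invocation) is the right idea, and it is essentially what the paper does. But the adversary you chose makes Step 2 fail, and the gap is not patchable within the ``always buy $k$'' framework.

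The claim $\E[f(q'_t)] \geq c/e^{\epsilon(t)}$ does not follow from the smoothness inequality, and is in fact false for natural noise distributions. Under your always-buy adversary, the true state $q_t = kt \to \infty$, so $q'_t$ drifts far out of the nonlinear region of $C$. Your smoothness inequality $\Pr[q'_t \in S+2k] \leq e^{\epsilon(t)}\Pr[q'_t \in S] + \delta$ only forces the law of $q'_t$ to have spread on the scale of $k$ (it rules out a point mass); iterating it to compare a threshold near $kt$ with one near the nonlinear region costs $\Theta(t)$ applications, giving a bound like $e^{-\Theta(t \epsilon(t))}$, not $e^{-\epsilon(t)}$. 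Concretely, a market maker that uses, say, Laplace noise of scale $\Theta(k/\epsilon)$ puts $q'_t$ concentrated around $kt$; for a standard cost function such as LMSR, $f(q) = k - (C(q+k)-C(q))$ decays exponentially in $q$, so $\E[f(q'_t)]$ decays exponentially in $t$ and $\sum_t \E[f(q'_t)]$ converges. No contradiction arises, even though the privacy guarantee of such a scheme is weak. The property ``$f$ decreases by a constant $c_0$ under a $2k$-shift'' only holds locally, inside the nonlinear region, which the always-buy strategy abandons.

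The missing idea is to use an adversary that \emph{anchors} the market state inside the nonlinear region, so that the per-round loss is bounded below by a fixed constant every time the noisy state deviates. The paper's trader fixes a target $q^*$ where $C$ is nonlinear and, at each round, pushes $q'_t$ back toward $q^*$ (buying when $q'_t < q^*$, selling when $q'_t > q^*$, capped at $k$). With $\omega = 1$ w.p.\ $p^* = C'(q^*)$, every such round has nonnegative expected profit, and any round where $|q'_t - q^*| \geq \gamma$ yields expected profit at least $\chi = \min\{D_C(q^*+\gamma,q^*), D_C(q^*-\gamma,q^*)\} > 0$ (Lemma~\ref{lem:profit}). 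Bounded loss then forces $\sum_t \Pr(|q'_t - q^*| \geq \gamma) < \infty$, while your coupling argument --- now comparing the target strategy to a variant that deviates at round $t$ to a nearby disjoint region --- gives $\Pr(q'_{t+1} \notin R^* \mid q'_t \in R^*) \geq (1-\delta)/(1+e^{\epsilon(t)})$ for each $t$. Under $e^{\epsilon(t)} = O(t)$ the latter is $\Omega(1/t)$, so the sum diverges. Your decomposition of expected loss as a sum of per-round terms and your treatment of the $\delta$ term are both on the right track; the adversary is what needs to change.
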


This theorem rules out bounded loss with $\epsilon(t) = \log(m t)$ for
any constant $m > 0$.  It is open whether it is possible to achieve
$\epsilon(t) = m \log(t)$ (and therefore $e^{\epsilon(t)} = t^m$) for
some $m > 1$, but such a guarantee would likely be insufficient in
most practical settings.

Note that with unbounded trade size (i.e., $k = \infty$), our result
would be trivial. A trader could change the market state (and hence
the price) by an arbitrary amount in a single trade. To provide
differential privacy, the noisy market state would then have to be
independent of past trades.  The noisy market price would not be
reflective of trader beliefs, and the noise added could be easily
exploited by traders to improve their profits.  By imposing a bound on
trade size, we only strengthen our negative result.


\ignore{
\begin{thm}
For any $\eps$-differentially private cost function based prediction market, where $\eps$ is a constant and each noise term $\eta_t$ is conditionally independent of the most recent trade $x_t$, given the history of play, the market maker can experience unbounded loss.
\end{thm}
} 


While the proof of Theorem~\ref{thm:unboundedloss} is quite technical,
the intuition is simple.  We consider the behavior of the noisy
cost-function market maker when there is a single trader trading in
the market repeatedly using a simple trading strategy.  This trader
chooses a \emph{target state} $q^*$.  Whenever the noisy market state
$q'_t$ is less than $q^*$ (and so $p_t < p^* \equiv C'(q^*)$), the
trader purchases shares, pushing the market state as close to $q^*$ as
possible.  When the noisy state $q'_t$ is greater than $q^*$ (so
$p_t > p^*$), the trader sells shares, again pushing the state as
close as possible to $q^*$.  Each trade makes a profit for the trader
\emph{in expectation} if it were the case that $\omega = 1$ with
probability $p^*$.  Since there is only a single trader, this means
that each such trade would result in an expected loss with respect to $p^*$ for the market maker. Unbounded expected loss for any $p^*$ implies unbounded loss in the worst case---either when $\omega=0$ or $\omega=1$.  The crux of
the proof involves showing that in order achieve bounded loss against
this trader, the amount of added noise $\eta_t$ cannot be too big as
$t$ grows, resulting in a sacrifice of privacy.

To formalize this intuition, we first give a more precise description
of the strategy $\s^*$ employed by the single trader we consider.

\begin{mydef}[Target strategy]\label{def.target}
  The \emph{target strategy} $\s^*$ with target $q^* \in \reals$
  chosen from a region in which $C$ is nonlinear is defined as
  follows.  For all rounds $t$,
\[
s^*_t(x_1, \ldots, x_{t-1}, q'_1, \ldots, q'_t)
=  \begin{cases}
      \min\{ q^* - q'_t, k\}  & \textrm{if $q'_t \leq q^*$}, \\
      - \min\{ q'_t - q^*, k\} & \textrm{otherwise}.
    \end{cases}
\]
\end{mydef}

\ignore{
\begin{mydef}[Target strategy]\label{def.target}
The \emph{target strategy} $\s^q$ with target $q$ has for all $t$:
\[  s^q_t \left( (\cM_1(x), \ldots, \cM_{t-1}(x) ), (y_1, \ldots, y_{t-1}) \right) = \cM_{t-1}(x) - q.  \]
If there is a limit on the size of each trade, a trader following $\s^q$ will choose $x_t$ to minimize $| \cM_{t-1}(x) - q - x_t |$, subject to $x_t$ being allowable.
\end{mydef}
}

\ignore{
When the trader can buy arbitrarily many shares per trade, she can
always move the quantity to exactly her target $q$; if we restrict the
size of trades, she can only push the price and quantity in her
desired direction, but cannot always move the quantity to be exactly
$q$.  Note that this strategy does not depend on past trades or the
history of outputs before the most recent round.
}

\ignore{
Here, we lower bound the trader's profit in a way that will make
future analysis easier.  We consider a constant-sized region of prices
around $p$ of radius $\tau$.  Every time the trader make a purchase
that moves the price from outside this region to $p$, she increases
her expected profit by a constant $\chi$ (which will be formally
specified later).  This is because she must have purchased a constant
number of shares to move the price by at least $\tau$, and she is
either buying each share at a price below $p$ or selling it at a price
above $p$.
}

\ignore{
We can equivalently phrase this strategy in terms of the quantity of
shares in the market.  Let $q$ be the quantity of shares such that $p
= C'(q)$.  Then there exists another constant $\gam$ such that if the
price is in $[p - \tau, p + \tau]$, then quantity is in $[q-\gam,
q+\gam]$.  Namely, for $q_1, q_2$ such that $C'(q_1) = p - \tau$ and
$C'(q_2) = p + \tau$, then $\gam = \min\{ q - q_1, q_2 - q \}$.
\rc{Doing this because we haven't assumed $C$ is symmetric about $p$.
  Is it necessary?  Can we have asymmetric intervals?}  We now show
that we can lower bound the trader's expected profit by an expression
that depends only on $\chi$ and the probability that the noisy
quantity falls outside of the range $[q - \gam, q + \gam]$.
}

As described above, if $\omega=1$ with probability $p^*$, a trader
following this target strategy makes a non-negative expected profit on
every round of trade.  Furthermore, this trader makes an expected
profit of at least some constant $\chi > 0$ on each round in which the
noisy market state $q'_t$ is more than a constant distance $\gamma$
from $q^*$.  The market maker must subsidize this profit, taking an
expected loss with respect to $p^*$ on each round.  These ideas are
formalized in Lemma~\ref{lem:profit}, which lower bounds the expected
loss of the market maker in terms of the probability of $q'_t$ falling
far from $q^*$.  In this statement, $D_C$ denotes the Bregman
divergence\footnote{The \emph{Bregman divergence} of a convex function
  $F$ of a single variable is defined as
  $D_F(p,q) = F(p) - F(q) - F'(q)(p-q)$.  The Bregman divergence is
  always non-negative. If $F$ is strictly convex, it is strictly
  positive when the arguments are not equal.\label{fn:breg}} of
$C$. The proof is in the appendix.

\begin{lemma}
  Consider a noisy cost-function market maker satisfying the
  conditions in Theorem~\ref{thm:unboundedloss} with a single trader
  following the target strategy $\s^*$ with target $q^*$.  Suppose
  $\omega = 1$ with probability $p^* = C'(q^*)$.  Then for any
  $\gamma$ such that $0 < \gamma \leq k$,
\[
\E
\left[L_T(x_1, \ldots, x_T, \eta_1, \ldots, \eta_T, \omega)\right]
\geq 
\chi \sum_{t=1}^T \Pr( |q'_t - q^* | \geq \gam) 
\]
where the expectation and probability are taken over the randomness in
the noise values $\eta_1, \eta_2, \ldots$, the resulting actions $x_1,
x_2, \ldots$ of the trader, and the random outcome $\omega$, and where
$\chi = \min\{
D_C(q^* + \gamma, q^*),
D_C(q^* - \gamma, q^*)
\} > 0$.
\label{lem:profit}
\end{lemma}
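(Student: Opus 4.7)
The plan is to show that, because the single trader is the sole counterparty, the market maker's expected loss equals the trader's expected profit, which decomposes over rounds; I will then lower bound the per-round profit by $\chi$ whenever $|q'_t - q^*|\geq \gam$ and by $0$ otherwise. Specifically, taking the expectation over $\om\sim\mathrm{Bernoulli}(p^*)$ first in the definition of $L_T$ and using $q_{T+1}=\sum_t x_t$, I would rewrite
\[
\E[L_T]=\sum_{t=1}^T \E[\pi_t],\qquad \pi_t := p^* x_t - \bigl(C(q'_t+x_t)-C(q'_t)\bigr),
\]
where the outer expectation is over the noise sequence that determines $q'_t$ and (via $\s^*$) $x_t$. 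It then suffices to lower bound each $\E[\pi_t]$.

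Next I would establish that $\pi_t\geq 0$ for every realization. If $q'_t\leq q^*$, then $x_t\geq 0$ and $q'_t+x_t\leq q^*$, so $C'(u)\leq C'(q^*)=p^*$ for $u\in[q'_t,q'_t+x_t]$, giving $C(q'_t+x_t)-C(q'_t)=\int_{q'_t}^{q'_t+x_t}C'(u)\,du\leq p^* x_t$; the case $q'_t>q^*$ is symmetric.

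Now I would show $\pi_t\geq \chi$ whenever $|q'_t-q^*|\geq\gam$. By symmetry, assume $q'_t\leq q^*-\gam$ (so $x_t>0$), and split into two cases.
\textbf{Case A:} $q^*-q'_t\leq k$, so $x_t=q^*-q'_t$ and the state moves exactly to $q^*$. A direct computation using $C'(q^*)=p^*$ yields
\[
\pi_t = p^*(q^*-q'_t)+C(q'_t)-C(q^*) = D_C(q'_t,q^*).
\]
Since $\tfrac{d}{dp}D_C(p,q^*)=C'(p)-p^*\leq 0$ for $p\leq q^*$, the map $p\mapsto D_C(p,q^*)$ is non-increasing on $(-\infty,q^*]$, so $\pi_t\geq D_C(q^*-\gam,q^*)\geq \chi$.
\textbf{Case B:} $q^*-q'_t>k$, so $x_t=k$ and $q'_t+k<q^*$. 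Define $g(q):=p^*k-(C(q+k)-C(q))$; then $g'(q)=C'(q)-C'(q+k)\leq 0$ by convexity of $C$, so $g$ is non-increasing. Hence $\pi_t=g(q'_t)\geq g(q^*-k)$, and a direct computation gives $g(q^*-k)=D_C(q^*-k,q^*)$. Because $\gam\leq k$ and $D_C(\cdot,q^*)$ is non-increasing on $(-\infty,q^*]$, this is at least $D_C(q^*-\gam,q^*)\geq \chi$.

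To conclude, I combine these bounds to get $\E[\pi_t]\geq \chi\,\Pr(|q'_t-q^*|\geq\gam)$ for each $t$, then sum over $t$. Strict positivity of $\chi$ follows because $C$ is nonlinear (hence strictly convex in a neighborhood of) $q^*$, making $D_C(q^*\pm\gam,q^*)>0$ for $\gam$ in the nonlinear region. The main obstacle is Case B: it is not obvious a priori that a capped trade starting far below $q^*$ still yields profit at least $\chi$, and the clean way to handle it is the monotonicity argument for $g$, which reduces the arbitrary starting state to the worst-case starting state $q^*-k$, where the profit telescopes exactly into the Bregman divergence.
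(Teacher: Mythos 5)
Your proof is correct and reaches the same conclusion as the paper, but by a somewhat different route. Both arguments begin identically: decompose $\E[L_T]=\sum_t \E[\pi_t]$ with $\pi_t = C'(q^*)x_t - C(q'_t+x_t)+C(q'_t)$, show $\pi_t\geq 0$ always, and then show $\pi_t\geq\chi$ whenever $|q'_t-q^*|\geq\gamma$. Where you diverge is in the last step. The paper uses a single decomposition
\[
\pi_t = \bigl[C'(q^*)(x_t+\gamma) - C(q'_t+x_t) + C(q'_t-\gamma)\bigr] + \bigl[-C'(q^*)\gamma - C(q'_t-\gamma) + C(q'_t)\bigr],
\]
shows the first bracket is nonnegative by the same auxiliary profit lemma (Lemma~\ref{lem:profittrade}) used to get $\pi_t\geq 0$, and bounds the second bracket below by $D_C(q^*\pm\gamma,q^*)$ via convexity; this treats the ``reaches $q^*$'' and ``capped at $k$'' situations uniformly. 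You instead split into Case A (trade reaches $q^*$, so $\pi_t$ equals a Bregman divergence exactly, then use monotonicity of $p\mapsto D_C(p,q^*)$) and Case B (trade capped at $k$, then use monotonicity of $g(q)=p^*k-C(q+k)+C(q)$ to reduce to the worst-case start $q^*-k$). Your version is more transparent and elementary---each case closes with a direct computation---but needs two separate monotonicity arguments, whereas the paper's single decomposition reuses one lemma twice at the cost of being less obvious to discover. Both are valid; the content is the same modulo which side of $q^*$ is treated explicitly.
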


\ignore{
Intuitively, Lemma \ref{lem.profit} lower bounds the trader's (possibly complicated) per-round profit by $\chi$ if the noisy quantity is in $[q - \gam, q + \gam]$ and 0 otherwise.  The value of $\chi$ will then the minimum of her expected profit from a trade that pushes the quantity either from $q-\gam$ to $q$ or from $q+\gam$ to $q$.  If the trader buys or sells any additional shares,  it is because the noisy quantity was strictly outside of the range $[q - \gam, q + \gam]$.  We can then consider her trade in two parts: first, the $|x_t - \gam|$ shares that moved the quantity from $q_t + \eta_t$ to $q \pm \gam$, and second, the remaining $\gam$ shares that moved the quantity from exactly $q \pm c$ to $q$, from which she receives expected profit at most $\chi$.   As described in the proof of Lemma \ref{lem.profit}, her per-share profit from the first portion of the trade must be higher than per-share profit from the latter portion of the trade, so her expected profit from the entire trade is at least:
\[ \E_p\left[ \mbox{profit}(x_t) \right] \geq \frac{\chi}{\gam}(|x_t - \gam|) + \chi \geq \chi. \]
}

\ignore{
\todo{formalize the last argument in two cases, either here or in proof of Lemma \ref{lem.profit}: 1. starting at any market state $q$ such that $C'(q+\tau) <=p$ and buying tau shares, and 2. starting at any market state q such that $C'(q-\tau) >= p$ and selling tau shares} \todo{Do this the first time it's introduced}
}

We now complete the proof.

\begin{proof}[Proof of Theorem~\ref{thm:unboundedloss}]
  We will show that bounded loss implies that
  $(\epsilon(t), \delta)$-differential privacy cannot be achieved with
  $e^{\epsilon(t)} = O(t)$ for any constant $\delta \in [0,1)$.

  Throughout the proof, we reason about the probabilities of various
  events conditioned on there being a single trader playing a
  particular strategy.  
  All strategies we consider are
  deterministic, so all probabilities are taken just with respect to
  the randomness in the market maker's added noise ($\eta_1, \eta_2,
  \ldots$).

  As described above, we focus on the case in which a single
  trader plays the target strategy $\s^*$ with target $q^*$.  Define
  $R^*$ to be the open region of radius $k/4$ around $q^*$, that is,
  $R^* = (q^* - k/4, q^* + k/4)$.  Let $\hat{q} = q^* + k/2$ and let
  $\hat{R} = (\hat{q} - k/4, \hat{q} + k/4)$.  Notice that $R^*$ and
  $\hat{R}$ do not intersect, but from any market state $q \in R^*$ a
  trader could move the market state to $\hat{q}$ with a purchase or
  sale of no more than $k$ shares. 
 \begin{figure}[t!]
\begin{nscenter} 
\includegraphics[scale=0.6]{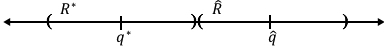}
\vspace{-10pt}
\end{nscenter}
\end{figure}

  For any round $t$, let $\s^{t}$ be the strategy in which
  $s^{t}_{\tau} = s^*_{\tau}$ for all rounds $\tau \ne t$, but
  $s^{t}_t(x_1, \ldots, x_{t-1}, q'_1, \ldots, q'_t) = \hat{q} -
  q'_t$ if $|\hat{q} - q'_t| \leq k$ (otherwise, $s^{t}_t$ can be
  defined arbitrarily).  In other words, a trader playing strategy
  $\s^{t}$ behaves identically to a trader playing strategy $\s^*$ on
  all rounds except round $t$.  On round $t$, the trader instead
  attempts to move the market state to $\hat{q}$.

  For any $t$, the behavior of a trader playing strategy $\s^*$ and a
  trader playing strategy $\s^t$ are indistinguishable through round
  $t-1$, and therefore the behavior of the market maker is
  indistinguishable as well.  At round $t$, if it is the case that
  $q'_t \in R^*$ (and therefore $|q'_t - q^*| \leq k/4 < k$ and also $|q'_t - \hat{q}| \leq 3k/4 < k$), then a
  trader playing strategy $\s^*$ would purchase $q^* - q'_t$ shares,
  while a trader playing strategy $\s^t$ would purchase $\hat{q} -
  q'_t$.  Differential privacy tells us that conditioned on such a
  state being reached, the probability that $q'_{t+1}$ lies in any
  range (and in particular, in $R^*$) should not be too different
  depending on which of the two actions the trader takes.  More
  formally, if the market maker satisfies $\epsilon(t)$-differential
  privacy, then for all rounds $t$, 
  \begin{align*}
    e^{\epsilon(t)} 
    &\geq \frac{\Pr(q'_{t+1} \in R^* | \s = \s^*, q'_t \in
      R^*) -  \delta }{\Pr(q'_{t+1} \in R^* | \s = \s^t, q'_t \in R^*)}
    \geq \frac{\Pr(q'_{t+1} \in R^* | \s = \s^*, q'_t \in
      R^*) - \delta }{\Pr(q'_{t+1} \not\in \hat{R} | \s = \s^t, q'_t \in
      R^*)}
    \\
    &= \frac{\Pr(q'_{t+1} \in R^* | \s = \s^*, q'_t \in
      R^*) - \delta }{\Pr(q'_{t+1} \not\in R^* | \s = \s^*, q'_t \in
      R^*)}.
  \end{align*}
  The first inequality follows from the definition of
  $(\epsilon(t), \delta)$-differential privacy.  The second follows from
  the fact that $R^*$ and $\hat{R}$ are disjoint.  The last line is a
  consequence of the trade-independent noise assumption.  
  By simple
  algebraic manipulation, for all $t$,
  \begin{align}
  \Pr(q'_{t+1} \not\in R^* | \s = \s^*, q'_t \in R^*)
  \geq \frac{1 - \delta }{1+e^{\epsilon(t)}}.
    \label{eqn:epsbound}
  \end{align}

  We now further investigate the term on the left-hand side of this
  equation.  For the remainder of the proof, we assume that
  $\s = \s^*$ and implicitly condition on this.  

  Applying Lemma~\ref{lem:profit} with $\gamma = k/4$, we find that
  the expected value of the market maker's loss after $T$ rounds if
  $\omega = 1$ with probability $p^* = C'(q*)$ is lower bounded by
  $\chi \sum_{t=1}^T \Pr(q'_t \not\in R^*)$ for the appropriate constant $\chi$.
  This implies that for at least one of $\omega = 1$ or $\omega = 0$, 
  $
  \E\left[L_T(x_1, \ldots, x_T, \eta_1, \ldots, \eta_T, \omega)\right]
  \geq \chi \sum_{t=1}^T \Pr(q'_t \not\in R^*)
  $
  where the expectation is just over the random noise of the market
  maker and the resulting actions of the trader.  Since we have
  assumed that the market maker's loss is bounded, this implies there
  must exist some loss bound $B$ such that 
  \begin{align}
    \frac{B}{\chi} 
    & \geq \sum_{t=1}^\infty \Pr(q'_t \not\in R^*).
    \label{eqn:Bchibound}
  \end{align}
  Fix any constant $\alpha \in (0,1)$.  Equation~\ref{eqn:Bchibound}
  implies that for all but finitely many $t$, $\Pr(q'_t \not\in R^*) <
  \alpha$, or equivalently, for all but finitely many $t$, $\Pr(q'_t
  \in R^*) \geq 1-\alpha$.  Call the set of $t$ for which this holds
  $\T$.  Equation~\ref{eqn:Bchibound} also implies that
  \begin{align*}
    \frac{B}{\chi} 
    & \geq \sum_{t=1}^\infty \left[ \Pr(q'_{t+1} \not\in R^* | q'_t \in R^*)
    \Pr(q'_t \in R^*) + \Pr(q'_{t+1} \not\in R^* | q'_t \not\in R^*)
    \Pr(q'_t \not\in R^*) \right]
    \\
    & \geq \sum_{t=1}^\infty \Pr(q'_{t+1} \not\in R^* | q'_t \in R^*)
    \Pr(q'_t \in R^*)
  \geq (1-\alpha) \sum_{t \in \T} \Pr(q'_{t+1} \not\in R^* | q'_t \in
    R^*).
  \end{align*}
 
  Combining this with Equation~\ref{eqn:epsbound} yields
  \begin{align}
  \sum_{t \in \T} \frac{1 - \delta }{1+e^{\epsilon(t)}} \leq \frac{B}{\chi (1-\alpha)}.
  \label{eqn:sumbound}
  \end{align}

  Now suppose for contradiction that $e^{\epsilon(t)} = O(t)$.  Then
  by definition, for some constant $m>1$ there exists a round
  $\tau$ such that for all $t > \tau$, $e^{\epsilon(t)} \leq m t$.
  Then
  \begin{align*}
    \sum_{t \in \T} \frac{1 - \delta }{1+e^{\epsilon(t)}} 
    &\geq \sum_{t \in \T, t > \tau} \frac{1 - \delta }{1+e^{\epsilon(t)}}
    \geq \sum_{t \in \T, t > \tau} \frac{1 - \delta }{1+mt}
    > \frac{1 - \delta }{m} \sum_{t \in \T, t > \tau} \frac{1}{1+t}.
  \end{align*}
  Since this sum is over all natural numbers $t$ except a finite
  number, it must diverge, and therefore Equation~\ref{eqn:sumbound}
  cannot hold.  Therefore, we cannot have $e^{\epsilon(t)} = O(t)$.
\end{proof}

\ignore{
\section{Can this be avoided with minimum purchase sizes or transaction fees?}

\begin{itemize}
\item If yes, this can be our positive result to pair with the negative result above.  Most likely this will come from the hybrid/consistent counting mechanism from Chan et. al.  
\item If not, then we should get an impossibility result to strengthen the negative above.  We can pair that with the not-so-interesting positive result from limiting the number of times each player can trade (trivial bound on loss there).
\item Simulation of the consistent mechanism (with a known $T$ number of rounds) to (1) get an understand its behavior and (2) see if the profit is unbounded -- or at least growing linearly with $T$ -- for the trading strategy we've been considering.
\end{itemize}

} 

\section{Discussion}

We designed a class of randomized wagering mechanisms that keep
bettors' reports private while maintaining truthfulness, budget
balance in expectation, and other desirable properties of weighted
score wagering mechanisms.  The parameters of our mechanisms can be
tuned to achieve a tradeoff between the level of privacy guaranteed
and the sensitivity of a bettor's payment to her own report.
Determining how to best make this tradeoff in practice (and more
generally, what level of privacy is acceptable in differentially
private algorithms) is an open empirical question.

While our results in the dynamic setting are negative, there are
several potential avenues for circumventing our lower bound.  The
lower bound shows that it is not possible to obtain reasonable privacy
guarantees using a noisy cost-function market maker when traders may
buy or sell fractional security shares, as is typically assumed in the
cost function literature.  Indeed, the adversarial trader we consider
buys and sells arbitrarily small fractions when the market state is
close to its target.  This behavior could be prevented by enforcing a
minimum unit of purchase.  Perhaps cleverly designed noise could allow
us to avoid the lower bound with this additional restriction.
However, based on preliminary simulations of a noisy cost-function
market based on Hanson's LMSR~\citeyear{H03} with noise drawn using
standard binary streaming approaches \citep{DNPR10,CSS11}, it appears
an adversary can still cause a market maker using these ``standard''
techniques to have unbounded loss by buying one unit when the noisy
market state is below the target and selling one unit when it is above.

One could also attempt to circumvent the lower bound by adding a
transaction fee for each trade that is large enough that traders
cannot profit off the market's noise.  While the fee could always be
set large enough to guarantee bounded loss, a large fee would
discourage trade in the market and limit its predictive power.  A
careful analysis would be required to ensure that the fee could be set
high enough to maintain bounded loss without rendering the market
predictions useless.

\bibliographystyle{plainnat}
\bibliography{refs}

\newpage
\appendix
\section*{APPENDIX}
\setcounter{section}{1}
\subsection{Omitted Proofs from Section~\ref{sec:oneshot}}

\begin{proof}[Proof of Lemma~\ref{lem:expprofit}]
For each $i \in \N$,
\begin{equation}
\E[x_i(p_i, \omega)] 
= \frac{\alpha s(p_i, \omega) + \beta}{1+\beta}
- \beta  \frac{1 - \alpha s(p_i, \omega)}{1+\beta}
= \alpha s(p_i, \omega)
\label{eqn:expx}
\end{equation}
and so
\[
   \E[\Pi_i(\p, \vm, \om)] 
   = m_i \left(\alpha s(p_i, \om) - \frac{\sum_{j \in \N} m_j \alpha s_j(p_j,
          \om)}{\sum_{j \in \N} m_j } \right).
\]
This is precisely the profit to bettor $i$ in a WSWM with scoring rule
$\alpha s$.
\end{proof}

\begin{proof}[Proof of Theorem~\ref{thm:conc}]
  For any $j \in \N$, consider the quantity $m_j x_j(p_j, \omega)$.
  From Equation~\ref{eqn:expx}, $\E[m_j x_j(p_j, \omega)] = m_j \alpha
  s(p_j, \omega)$.  Additionally we can bound $m_j x_j(p_j, \omega)
  \in [-m_j \beta, m_j]$.  Hoeffding's inequality then implies that
  with probability at least $1- \delta$,
\[
\left| \sum_{j \in \N} m_j \alpha s(p_j, \omega) - \sum_{j \in \N}
  m_j x_j(p_j, \omega) \right|
\leq \|\m\|_2 (1+\beta) \sqrt{\frac{\ln(2/\delta)}{2}}.
\]
From the definition of the private wagering mechanism and
Lemma~\ref{lem:expprofit}, we then have that with probability at least
$1-\delta$, for any $i \in \N$, 
\begin{align*}
\left| \Pi_i(\p, \vm, \om) - \E[\Pi_i(\p, \vm, \om)] \right| 
&= \frac{m_i}{\sum_{j \in \N} m_j} \left|  \sum_{j \in \N} m_j \alpha
  s(p_j, \omega) - \sum_{j \in \N} m_j x_j(p_j, \omega) \right|
\\
&\leq m_i \frac{\|\m\|_2}{\| \m \|_1} (1+\beta)
\sqrt{\frac{\ln(2/\delta)}{2}}
\end{align*}
as desired.
\end{proof}

\subsection{Omitted Proofs from Section~\ref{sec:costfuncs}}

The proof of Lemma~\ref{lem:profit} makes use of the following
technical lemma, which says that it is profitable in expectation to
sell shares as long as the price remains above $p^*$ or to purchase
shares as long as the price remains below $p^*$.  In this statement,
$q$ can be interpreted as the current market state and $x$ as a new
purchase (or sale); $C'(q^*) x - C(q+x) + C(q) \geq 0$ would then be
the expected profit of a trader making this purchase or sale if
$\omega \sim p^* = C'(q^*)$.

\begin{lemma}
  Fix any convex function $C$ and any $q^*$, $q$, and $x$ such that $q
  + x \geq q^*$ if $x \leq 0$ and $q + x \leq q^*$ if $x \geq 0$. Then
  $C'(q^*) x - C(q+x) + C(q) \geq 0$.
\label{lem:profittrade}
\end{lemma}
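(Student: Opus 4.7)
The plan is to rewrite the desired inequality $C'(q^*) x - C(q+x) + C(q) \geq 0$ as $C(q+x) - C(q) \leq C'(q^*) x$, and then deduce this from convexity of $C$ together with the sign conditions on $x$ and the positional relationship between $q+x$ and $q^*$.

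First, I would invoke the first-order (supporting hyperplane) characterization of convexity: for a convex differentiable $C$ and any two points $a, b$, $C(a) \geq C(b) + C'(b)(a-b)$. Taking $a = q$ and $b = q+x$ and rearranging gives $C(q+x) - C(q) \leq C'(q+x) \cdot x$. This reduces the problem to showing $C'(q+x) \cdot x \leq C'(q^*) \cdot x$, i.e., replacing the slope $C'(q+x)$ with the slope $C'(q^*)$ in a favorable direction.

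Next I would split into the two cases given in the hypothesis. In the case $x \geq 0$ with $q+x \leq q^*$, monotonicity of $C'$ (which follows from convexity of $C$) gives $C'(q+x) \leq C'(q^*)$; multiplying by the nonnegative quantity $x$ preserves the inequality, so $C'(q+x) \cdot x \leq C'(q^*) \cdot x$. In the case $x \leq 0$ with $q+x \geq q^*$, monotonicity of $C'$ instead gives $C'(q+x) \geq C'(q^*)$; multiplying by $x \leq 0$ flips the direction, again yielding $C'(q+x) \cdot x \leq C'(q^*) \cdot x$. Chaining with the previous inequality completes the proof.

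There is no real obstacle here: the lemma is essentially a two-line consequence of convexity, and the hypothesis is tailored so that the two cases correspond exactly to the two ways in which the signs line up after the monotonicity step. The only minor subtlety to be careful about is the direction-flip when multiplying by a nonpositive $x$, which is precisely why the two conditions on $q+x$ versus $q^*$ are stated asymmetrically. (An equivalent route would be to write $C(q+x) - C(q) = \int_q^{q+x} C'(u)\, du$ and compare pointwise to $\int_q^{q+x} C'(q^*)\, du$ using monotonicity of $C'$ on the appropriate interval; I would use the tangent-line form above as it is slightly more concise.)
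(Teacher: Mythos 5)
Your proof is correct and takes essentially the same approach as the paper: both arguments reduce to the comparison $C'(q+x)\,x \leq C'(q^*)\,x$ via monotonicity of $C'$ in the two cases, then apply the first-order tangent-line inequality for convex functions, which the paper phrases equivalently as nonnegativity of the Bregman divergence $D_C(q, q+x)$. Your version simply unpacks the Bregman-divergence step into the supporting-hyperplane form.
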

\begin{proof}
  Since $C$ is convex, the assumptions in the lemma statement imply
  that if $x \leq 0$ then $C'(q + x) \geq C'(q^*)$, while if $x \geq
  0$ then $C'(q + x) \leq C'(q^*)$.  Therefore, in either case
  $C'(q+x) x  \leq C'(q^*) x$, and
  \begin{align*}
    C'(q^*) x - C(q+x) + C(q)
    &\geq C'(q+x) x - C(q+x) + C(q)
   = D_C(q, q+x) \geq 0.
  \end{align*}
\end{proof}

\begin{proof}[Proof of Lemma~\ref{lem:profit}]
  From the definition of the market maker's loss, we can rewrite 
 $\E \left[L_T(x_1, \ldots, x_T, \eta_1, \ldots, \eta_T, \omega)\right] =
 \sum_{t=1}^T \E[\pi_t]$
  where $\pi_t$ is the expected (over just the
  randomness in $\omega$) loss of the market maker from the $t$th
  trade, i.e.,
  \[
  \pi_t = C'(q^*) x_t - C(q'_t + x_t) + C(q'_t) .
  \]
  By definition of the target strategy $\s^*$ and
  Lemma~\ref{lem:profittrade}, $\pi_t \geq 0$ for all $t$.

  Consider a round $t$ in which $|q'_t - q^*| \geq \gamma$.  Suppose
  first that $q'_t \geq q^* + \gamma$, so a trader playing the target
  strategy would sell.  By definition of $\s^*$, $x_t = -\min\{q'_t -
  q^*, k\} \leq -\gamma$.  We can write
  \begin{align*}
     \pi_t &= C'(q^*) (x_t + \gamma) - C'(q^*) \gamma 
     - C(q'_t + x_t) + C(q'_t - \gamma) - C(q'_t - \gamma) + C(q'_t)
     \\ 
     &\geq - C'(q^*) \gamma - C(q'_t - \gamma) + C(q'_t)
     \\
     &  \geq - C'(q^*) \gamma - C(q^*) + C(q^* + \gamma) 
     \\
     & = D_C(q^* + \gamma, q^*) \geq \chi,
  \end{align*}
  where $\chi$ is defined as in the lemma statement.  The first
  inequality follows from an application of
  Lemma~\ref{lem:profittrade} with $q = q'_t - \gamma$ and $x = x_t +
  \gamma$.  The second follows from the convexity of $C$ and the
  assumption that $q'_t \geq q^* + \gamma$.

  If instead $q'_t \leq q^* - \gamma$ (so a trader playing the target
  strategy would buy), a similar argument can be made to show that
    $\pi_t \geq D_C(q^* - \gamma, q^*) \geq \chi$. 

  Putting this all together, we have
  \begin{align*}
    \E\left[L_T(x_1, \ldots, x_T, \eta_1, \ldots, \eta_T,
      \omega)\right] 
    &= \sum_{t=1}^T \E[\pi_t]
    \geq \sum_{t=1}^T \chi \Pr( |q'_t - q^* | \geq \gam)
  \end{align*}
  as desired.  The fact that $\chi > 0$ follows from the fact that it
  is the minimum of two Bregman divergences, each of which is strictly
  positive since $C$ is nonlinear (and thus strictly convex) in the
  region around $q^*$ and the arguments are not equal.
\end{proof}

\end{document}